\documentclass[a4paper,USenglish,cleveref,autoref,thm-restate]{lipics-v2021}

\pdfoutput=1
\hideLIPIcs
\usepackage[utf8]{inputenc}
\usepackage{hyperref}
\usepackage{graphicx}
\usepackage{amsmath}
\usepackage{amsthm}
\usepackage{amssymb}
\usepackage{dsfont}
\usepackage{cleveref}
\usepackage[dvipsnames]{xcolor}
\usepackage{float}
\usepackage{tikz}
\usetikzlibrary{shapes.geometric} 
\usepackage{xspace}
\usepackage[normalem]{ulem}

\DeclareFontShape{T1}{lmr}{m}{scit}{<->ssub*lmr/m/scsl}{}
\DeclareFontShape{T1}{lmr}{bx}{scit}{<->ssub*lmr/m/scsl}{}

\definecolor{blue}{RGB}{0,50,200}
\definecolor{magenta}{RGB}{255,0,255}
\hypersetup{colorlinks,linkcolor=blue,urlcolor=blue,citecolor=magenta}

\title{Online Bisection with Ring Demands}

\author{Mateusz Basiak}{University of Wrocław, Poland}{mateusz.basiak@cs.uni.wroc.pl}{https://orcid.org/0009-0009-0210-6451}{}
\author{Marcin Bienkowski}{University of Wrocław, Poland}{marcin.bienkowski@cs.uni.wroc.pl}{https://orcid.org/0000-0002-2453-7772}{}
\author{Guy Even}{
School of Electrical Engineering, Tel-Aviv University, Tel Aviv, Israel \and
Max Planck Institute for Informatics, Saarbrücken, Germany}{guyeven@mpi-inf.mpg.de}{https://orcid.org/0000-0001-5407-330X}{}
\author{Agnieszka Tatarczuk}{University of Wrocław, Poland}{agnieszka.tatarczuk@cs.uni.wroc.pl}{https://orcid.org/0009-0008-3849-9165}{}

\authorrunning{M. Basiak, M. Bienkowski, G. Even and A. Tatarczuk}
\Copyright{Mateusz Basiak, Marcin Bienkowski, Guy Even and Agnieszka Tatarczuk} 
\ccsdesc{Theory of computation~Online    algorithms}
\keywords{Online Bisection, Competitive Analysis, Resource Augmentation, Ring Network}

\funding{Supported by Polish National Science Centre 
\bigskip
grant 2022/45/B/ST6/00559.\\
\emph{For the purpose of open access, the authors have applied CC-BY public
copyright license to any Author Accepted Manuscript (AAM) version arising from
this submission.}}

\nolinenumbers 

\newcommand{\HC}[1]{#1_{\textrm{HIT}}}
\newcommand{\MC}[1]{#1_{\textrm{REC}}}
\newcommand{\OPT}{\textsc{Opt}\xspace}
\newcommand{\OFF}{\textsc{Off}\xspace}
\newcommand{\ALG}{\textsc{Alg}\xspace}

\newcommand{\OB}{\textsc{Onl}\xspace}
\newcommand{\CE}[1]{\textsc{cut}(#1)}
\newcommand{\eps}{\varepsilon}
\newcommand{\AC}{A^{\circ}}
\newcommand{\A}{\mathcal{A}}
\newcommand{\C}{\mathcal{C}}
\newcommand{\E}{\mathbf{E}}
\newcommand{\tp}{\tilde{p}}
\newcommand{\less}{\textsc{less}}

\begin{document}

\maketitle

\begin{abstract}
    The online bisection problem requires maintaining a dynamic partition of $n$ nodes into two equal-sized clusters. Requests arrive sequentially as node pairs. If the nodes lie in different clusters, the algorithm pays unit cost. After each request, the algorithm may migrate nodes between clusters at unit cost per node. This problem models datacenter resource allocation where virtual machines must be assigned to servers, balancing communication costs against migration overhead.

    We study the variant where requests are restricted to edges of a ring network, an abstraction of ring-allreduce patterns in distributed machine learning. Despite this restriction, the problem remains challenging with an $\Omega(n)$ deterministic lower bound. We present a randomized algorithm achieving $O(\varepsilon^{-3} \cdot \log^2 n)$ competitive ratio using resource augmentation that allows clusters of size at most~$(3/4 + \varepsilon) \cdot n$.

    Our approach formulates the problem as a metrical task system with a restricted state space. By limiting the number of cut-edges (i.e., ring edges between clusters) to at most $2k$, where $k = \Theta(1/\varepsilon)$, we reduce the state space from exponential to polynomial (i.e., $n^{O(k)}$). The key technical contribution is proving that this restriction increases cost by only a factor of $O(k)$. Our algorithm follows by applying the randomized MTS solution of Bubeck et~al.~[SODA 2019].

    The best result to date for bisection with ring demands is the $O(n \cdot \log n)$-competitive deterministic online algorithm of Rajaraman and Wasim [ESA 2024] for the general setting. While prior work for ring-demands by Räcke et~al. [SPAA 2023] achieved $O(\log^3 n)$ for multiple clusters, their approach employs a resource augmentation factor of $2+\varepsilon$, making it inapplicable to bisection. 
\end{abstract}


\section{Introduction}

This work is motivated by the increasing interest in online graph partitioning problems. Among these, one of the most elegant and succinctly defined problems is the \emph{online bisection problem}~\cite{AvLoPS16}. The goal is to maintain a time-varying partition of $n$ nodes into two clusters, each of capacity $n/2$. The input is a sequence of requests, each corresponding to a pair of nodes. Serving a~pair whose nodes are in different clusters incurs a unit cost, while serving requests between nodes in the same cluster is free. After serving a request, the algorithm may update the partition, paying a unit cost for each node that changes clusters.

The objective is to minimize the total incurred cost. This is an \emph{online problem}: requests arrive sequentially, and the algorithm has to serve each before observing the next. The performance of an algorithm is measured by its competitive ratio~\cite{BorEl-98}, defined as the worst-case ratio between its total cost and that of an optimal offline algorithm \OPT that knows the entire sequence in advance.

\subparagraph{Motivation.}

Node pairs are ephemeral: once a request is served, it no longer incurs any cost. The online bisection problem models trade-offs that naturally arise in datacenters, where virtual machines (nodes) exchanging data have to be assigned to racks or physical servers (clusters). When two virtual machines are colocated (in the same cluster), their communication is cost-free; otherwise, it incurs latency and consumes bandwidth. Modern virtualization enables migration of virtual machines between servers (i.e., changes of node clusters), but such migrations incur network cost of their own.


\subsection{Previous results.}

This seemingly simple problem has proved surprisingly challenging, and despite substantial effort over the past decade, our understanding remains limited.

\subparagraph{Exact online bisection.}

The first algorithm, proposed by Avin et al.~\cite{AvLoPS16,AvBLPS20}, was a simple deterministic $O(n^2)$-competitive algorithm that maintained connected components of nodes that had communicated so far, mapping each component to a single cluster. This idea was later refined by Bienkowski and Schmid~\cite{BieSch24}, who proved that a slightly randomized version of this approach achieves a sub-quadratic competitive ratio of $\widetilde{O}(n^{2-1/{12}})$.

A polynomial dependence on $n$ is unavoidable for deterministic algorithms: via a reduction from online paging~\cite{SleTar85}, one obtains a lower bound of $\Omega(n)$~\cite{AvBLPS20}. For randomized algorithms, however, the known lower bound is merely $\Omega(\log n)$~\cite{HeNeRS21}, leaving an exponential gap between the best upper and lower bounds.\footnote{The lower bound of~\cite{HeNeRS21} was established for the so-called learning-model, but repeating their construction multiple times yields a lower bound for the general variant of the problem.}

\subparagraph{Resource augmentation.}

The limited progress on exact bisection prompted the exploration of \emph{resource-augmented} variants, where the online algorithm is permitted to relax the requirement of equal-sized clusters. More precisely, a \emph{$(1+\eps)$-augmented} algorithm allows at most $(1+\eps) \cdot (n/2)$ nodes in each cluster, while being compared against a non-augmented \OPT with clusters of equal size~$n/2$. We refer to $1+\eps$ as the \emph{augmentation factor} or \emph{balance parameter} of the algorithm. This relaxation is well motivated in practice because servers often have spare capacity that can be exploited to improve performance.

Surprisingly, the lower bound for deterministic algorithms in this setting remains $\Omega(n)$, and it holds for any nontrivial amount of augmentation, as long as it is not possible to place all nodes in a single cluster~\cite{AvBLPS20}. On the positive side, Rajaraman and Wasim~\cite{RajWas22} presented an $O(n \log n)$-competitive deterministic algorithm for any~fixed~$\eps > 0$.\footnote{Their algorithm also applies to the multi-cluster variant and uses an approach of Henzinger et al.~\cite{HeNeRS21} as~a~subroutine, thereby inheriting an exponential dependence on $1/\eps$ hidden in the $O$-notation. However, for the online bisection, a straightforward extension of the component-based algorithm of~\cite{AvBLPS20} to the resource-augmented setting would trivially yield an $O(\eps^{-1} \cdot n \log n)$-competitive algorithm.} No randomized algorithm achieving a better bound than $O(n \log n)$ is known.

\subparagraph{Multiple clusters.}

The online bisection problem has also been studied in a generalized form, known as \emph{online balanced graph partitioning}, where there are $\ell \geq 2$ clusters, each of size~$n / \ell$~\cite{AvBLPS20,AvLoPS16,BBKRSV21,FoRSc21,PaPaSc21,RScZa22,RajWas22}. This generalization has also been investigated in models with a~large augmentation factor of \mbox{$1+\eps \geq 2 $}~\cite{AvBLPS20,AvLoPS16,FoRSc21,RScZa22}, which would trivialize the online bisection problem, as all nodes could then be placed in a single cluster.

\subparagraph{Ring demands.}

Another natural restriction is to assume that requests must belong to a~predefined subset $Q \subseteq V \times V$ that is known \emph{a priori} to the online algorithm.

The case when $Q$ is a cycle connecting all nodes is of particular theoretical and practical interest. In fact, the ring-allreduce communication pattern~\cite{SerBal18} used in distributed machine learning follows this structure. For the theoretical perspective, the deterministic lower bound of~$\Omega(n)$ holds even under ring demands~\cite{AvLoPS16,AvBLPS20}. Ring demands were also studied by Räcke et al.~in the multi-cluster case, for which they presented an $O(\log^3 n)$-competitive randomized algorithm~\cite{RScZa23}. However, their algorithm crucially requires an augmentation factor $2+\eps$, which makes it inapplicable to the online bisection problem.


\subsection{Our Contribution}

In this work, we present a randomized online algorithm for the online bisection problem with ring demands. The augmentation factor is $3/2 +\varepsilon$, for $\varepsilon\in(0,1/2]$,\footnote{ In other words, each cluster may contain at most $(3/4+\varepsilon/2) \cdot n$ nodes.} and the competitive ratio is $O(\varepsilon^{-3} \cdot \log^2 n)$. For this setting, the best known solution to date is the $(1+\eps)$-augmented, $O(n  \log n)$-competitive deterministic algorithm by~Rajaraman and Wasim~\cite{RajWas22}, which remains the state of the art even when randomization and arbitrarily high augmentation are permitted.

\subparagraph{Overview.}

The bisection problem can be viewed as a \emph{metrical task system} (MTS)~\cite{BoLiSa92} in which every state is a partition, and distances between states are the ``edit distances'' between these partitions. Let $\mathcal{S}$ denote the set of states. The randomized online MTS algorithm of Bubeck et~al.~\cite{BuCoLL19} yields a competitive ratio $O(\log^2 |\mathcal{S}|)$, which is $O(n^2)$ since $|S|=\binom{n}{n/2}$.

To reduce the competitive ratio, we restrict the number of states to $n^{O(k)}$, for $k=\Theta(1/\varepsilon)$, as follows. Any partition of the cycle into two clusters can be represented by an even set of \emph{cut-edges}, meaning ring edges between nodes assigned to different clusters. A~natural choice for restricting the state space is to consider sets of at most $2k$ cut-edges. 

More formally, let MTS$(2k,\alpha)$ denote the MTS in which every state is an $\alpha$-balanced partition induced by at most $2k$ cut-edges. As the number of states of this MTS is $O(n^{O(k)})$, the randomized online MTS algorithm of Bubeck et al.~\cite{BuCoLL19} has cost at most $O(k^2 \cdot \log^2 n)$ times that of an \emph{optimal offline} $\alpha$-balanced algorithm using at most $2k$ cut-edges. 

This approach shifts the challenge to bounding the increase in cost caused by this sparsification of the set of states, i.e., bounding the ratio between the costs of the optimal solutions for MTS$(n,1)$ and MTS$(2k,\alpha)$. We succeed in bounding this ratio by $O(k)$ if $\alpha=3/2 + 1/k$ (see~\autoref{thm:offline}).
 
Our analysis works in phases in which an offline solution $\OFF$ for MTS$(2k,\alpha)$ (for $\alpha=3/2+1/k$) ``chases'' an optimal solution $\OPT$ for MTS$(n,1)$ until the $\alpha$-balance parameter is violated. At all times, we maintain the invariant that the cut-edges in $\OFF$ form a subset of the cut-edges of $\OPT$. This invariant guarantees that a request incurs a cost on $\OFF$ only if it incurs cost on $\OPT$. 

At the beginning of each phase, we start by sparsifying the cut-edges of $\OPT$ to a subset of cardinality at most $2k$. \autoref{lem:rebalancing} proves that this is possible with balance parameter of $1+1/k$. The phase proceeds with $\OFF$ iteratively responding to changes in $\OPT$ and ends when this response leads to a partition that is no longer $\alpha$-balanced.

We show that changes in $\OFF$ (in response to the single-node changes of \OPT) incur amortized cost that is bounded by the cost of \OPT within the phase (see \autoref{cor:step}). Moreover, the $O(n)$ rebalancing cost per phase can be charged to the cost of \OPT within the phase, which we lower-bound by $\Omega(n/k)$ (see~\autoref{lem:opt_lb}).

\subparagraph{Comparison to prior work.} 

Superficially, our algorithm shares certain similarities (the notion of cut-edges and the use of MTS routines) with the algorithm of Räcke et al.~\cite{RScZa23}, but the underlying mechanisms differ. In~particular, in their approach, the number of cut-edges corresponds to the number of~clusters, whereas in our case it serves as a parameter that affects both the cost and the augmentation factor. Moreover, the partition of the cycle in their approach is more rigid, since cut-edges are confined to specific segments of the cycle that remain fixed throughout the execution, whereas in our approach the partition is more flexible, allowing cut-edges to appear anywhere on the cycle.


\subsection{Other Related Work}

The online bisection problem and its generalization to the online balanced partitioning problem were also studied in several relaxed variants, such as the learning model (where the input admits a partition incurring no service cost)~\cite{HeNeRS21,HeNeSc19,PaPaSc20,PaPaSc21,RScZa23,RScZa25} and stochastic inputs~\cite{AvCoPS19}.

The offline counterparts of these problems (bisection and $k$-balanced partitioning) are NP-hard, and their approximation ratios have been improved in a long line of works; see~\cite{SarVaz95,ArKaKa99,FeKrNi00,FeiKra02,KraFei06,Raec08} for bisection and~\cite{SimTen97,KrNaSc09,AndRae06,EvNaRS99} for $k$-balanced partitioning. Recently, balanced partitioning was also studied in a setting where the input and output are as in the online model, but the algorithm operates offline and has access to the entire input~\cite{RScZa22}.

Closely related online problems include online disengagement (a ``dual'' of online bisection, where the algorithm pays for requests whose endpoints lie in the same cluster)~\cite{AzMaPT23,PaRoUm25,RajWas24} and the minimum linear arrangement problem, where the goal is to maintain a dynamically changing linear ordering of nodes~\cite{BieEve24,OlPSSS18}.


\subsection{Problem Definition}

In the online bisection problem, we are given $n$ nodes $v_1, v_2, \dots, v_n$, where $n$ is even. Instead of defining the problem in terms of partitions, we use an equivalent but more convenient coloring formulation: the algorithm maintains a coloring of all nodes with two colors, red and blue. There is a fixed initial coloring, where half of the nodes are colored red and the other half blue. 

Requests arrive online. Each request is a pair of nodes, referred to as its endpoints. For every request in an input sequence:
\begin{itemize}
    \item if its endpoints have the same color, the algorithm pays a \emph{hit cost} of $0$;
    \item if its endpoints have different colors, the algorithm pays a \emph{hit cost} of $1$;
    \item after paying the hit cost, the algorithm may \emph{recolor} any number of nodes, paying a~\emph{recoloring cost} equal to the number of recolored nodes.
\end{itemize}
For an algorithm $\ALG$ and input $\sigma$, we write $\ALG(\sigma)$ for the total cost on $\sigma$. We denote by $\HC{\ALG}$ and $\MC{\ALG}$ the \emph{hit cost} and \emph{recoloring cost}, respectively.

\subparagraph{Resource augmentation.}

A coloring is \emph{$\alpha$-balanced} (for $\alpha \geq 1$) if at most $\alpha \cdot (n/2)$ nodes are assigned each color. An algorithm is \emph{$\alpha$-augmented} if, once it performs the recoloring in~response to a request, the resulting coloring is $\alpha$-balanced.

An ($\alpha$-augmented) algorithm \ALG is \emph{$R$-competitive} if there exists $\beta$ such that, for every input $\sigma$, it holds that $\ALG(\sigma) \leq R \cdot \OPT(\sigma) + \beta$, where \OPT denotes the optimal offline algorithm that maintains a $1$-balanced coloring. The parameter $\beta$ cannot depend on $\sigma$, but may be a function of $n$. 
For randomized algorithms, we replace $\ALG(\sigma)$ above by its expected value and assume standard, oblivious adversary~\cite{BorEl-98} that does not see random bits used by the algorithm.

\subparagraph{Ring demands.}

In the variant studied in this paper, all requests correspond to edges of~a~cycle. Thus, there are $n$ possible request types: $(v_1, v_2), (v_2, v_3), \dots, (v_n, v_1)$.


\section{Outline of Our Solution}
\label{sec:ourresults}

We show that for every $\eps \in (0, 1/2]$, there exists an online $(3/2 + \eps)$-augmented randomized algorithm that is $O(\eps^{-3} \cdot \log^2 n)$-competitive. Below we outline our approach, starting with several necessary definitions.

\subparagraph{Cut-edges.}

At any time, the coloring maintained by algorithm $\ALG$ induces a set of \emph{cut-edges}, denoted $\CE{\ALG}$, consisting of edges of the cycle whose endpoints have different colors. Note that $|\CE{\ALG}|$ is even. Conversely, each even-cardinality set $\C$ of cut-edges corresponds to two equivalent colorings of the nodes, obtained from each other by swapping color names. A set of cut-edges is \emph{valid} if its cardinality is even. Hence, a valid set $\C$ of~cut-edges uniquely represents a coloring, up to a swap of color names. 

Fix any valid set $\C$ of cut-edges. The set $\C$ partitions the cycle into monochromatic arcs of alternating colors. Each such arc consists entirely of red or blue nodes, and we refer to it as a red or blue arc, respectively. We also define other arcs: for edges $e_i$ and $e_j$, let $A(e_i, e_j)$ denote the set of nodes on the shorter path between $e_i$ and $e_j$ along the cycle. Let $d(e_i, e_j) \triangleq |A(e_i, e_j)|$.

\subparagraph{Color frequency.}

For a given coloring, a color is called \emph{more frequent} if the number of nodes of that color exceeds $n/2$, and \emph{less frequent} otherwise. When both colors contain exactly $n/2$ nodes, either color may be treated as more frequent. For a valid set $\C$ of cut-edges, let $\less(\C)$ denote the number of nodes assigned the less frequent color in the coloring induced by $\C$; cf.~\autoref{fig:coloring}.

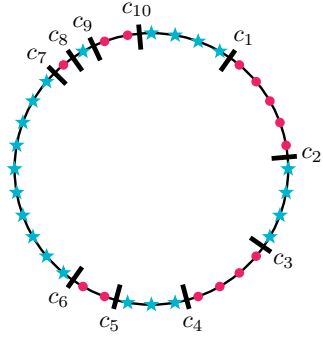
\begin{figure*}[t!]
    \centering
    \scalebox{0.9}{
         \begin{tikzpicture}[line cap=rect,line width=1pt]
        \filldraw [fill=white] (0,0) circle [radius=2cm];
       
        \foreach \angle [count=\xi] in {55, 5, -35, -75, -105, -125, -225, -235, -245, -265}
        {
          \draw[line width=2pt, color=black] (\angle:1.8cm) -- (\angle:2.1cm);
          \node[font=\normalsize] at (\angle:2.36cm) {$c_{\xi}$};
        }
         \foreach \k in { 10, 20, 30, 40, 50,  100, 110,  130,  240, 250,  290, 300, 310, 320}{
            \fill[WildStrawberry]  ++(\k:2cm) circle[radius=2pt];
         }
         \foreach \k in { 60, 70, 80, 90,  120,  140, 150, 160, 170, 180, 190, 200, 210, 220, 230,  260, 270, 280,  330, 340, 350, 360}{
            \node[star, star points=5, star point ratio=2.5, draw=Turquoise, fill=Turquoise, minimum size=4pt, inner sep=0pt] at (\k:2cm) {};
  
         }

        \end{tikzpicture}
    }
    \caption{An example coloring of a cycle with the set of cut-edges $\C = {c_1, c_2, \dots, c_{10}}$. The cycle contains 22 blue and 14 red nodes; hence the less frequent color is red, and $\less(\C) = 14$.}
    \label{fig:coloring}
\end{figure*}

\subparagraph{Global rebalancing.}

As stated in the introduction, we consider algorithms that use at most $2k$ cut-edges, for an integer constant $k$. However, although the initial coloring is $1$-balanced, it may involve more than $2k$ cut-edges. To address this, we introduce a \emph{global rebalancing} procedure; its properties are stated in the lemma below and proved in \autoref{sec:global_rebalancing}. The procedure is executed at the beginning and may also be invoked later during the execution of the algorithm.

\begin{restatable}[Global rebalancing procedure]{lemma}{rebalancing}
    \label{lem:rebalancing}
    For any valid $1$-balanced set $\C$ of cut-edges, there exists a deterministic procedure that computes a subset $\C_{2k} \subseteq \C$ of size $\min\{2k, |\C|\}$ such that $\C_{2k}$ is $(1+1/k)$-balanced.
\end{restatable}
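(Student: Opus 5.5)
We will obtain $\C_{2k}$ from $\C$ by repeatedly deleting pairs of consecutive cut-edges, controlling the imbalance at every step. Say $|\C| = 2m$ with $m > k$; otherwise we take $\C_{2k} = \C$. The set $\C$ cuts the cycle into $2m$ monochromatic arcs of alternating colors, so there are $m$ red arcs and $m$ blue arcs. Deleting the two cut-edges bounding a single arc flips the color of that arc. This merges the arc with its two neighbours into one arc of the opposite color, which is a single arc because $m \ge 2$. The result is a valid subset of size $2m-2$, and the new coloring differs from the old one only on the flipped arc.

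\textbf{Measuring the imbalance.} Let $D \triangleq n/2 - \less(\C)$, so that the more frequent color has exactly $n/2 + D$ nodes. A set is $(1+1/k)$-balanced iff $D \le n/(2k)$, so we will maintain the invariant $D \le n/(2k)$ at every step. It holds initially, since $\C$ is $1$-balanced and hence $D = 0$.

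\textbf{One step.} While the current set has $2m > 2k$ cut-edges, we pick the \emph{smallest} arc of the more frequent color and flip it. There are $m$ arcs of that color with total size $n/2 + D$, so the chosen arc has size
\[
  s \le \frac{n/2 + D}{m} \le \frac{n/2 + n/(2k)}{k+1} = \frac{n}{2k},
\]
where the last inequality uses the invariant and $m \ge k+1$. Flipping moves $s$ nodes from the majority color to the minority color, so the new imbalance is $|D - s|$. If $s \le D$ the imbalance drops to $D - s \le D$. Otherwise the other color becomes the majority with imbalance $s - D \le s \le n/(2k)$. In both cases the invariant is preserved.

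\textbf{Conclusion.} Each step removes exactly two edges of the current set, which is itself a subset of $\C$. Starting from the even number $2m$, we stop precisely when $2k$ edges remain. The resulting $\C_{2k} \subseteq \C$ has $|\C_{2k}| = \min\{2k, |\C|\}$ and is $(1+1/k)$-balanced, and the procedure is clearly deterministic.

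\textbf{Main obstacle.} The delicate point is avoiding accumulation of error across steps. Naively flipping the globally smallest arc gives a harmonic sum of perturbations, which is too large. The fix is to flip the smallest arc of the \emph{majority} color. Then each step either decreases the imbalance or replaces it by at most the arc size. The averaging bound $s \le (n/2+D)/m$ closes the invariant exactly when $m \ge k+1$.
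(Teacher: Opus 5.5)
Your proof is correct and is essentially the paper's argument: repeatedly flip the smallest arc of the more frequent color, bound its size by averaging over the $m$ arcs of that color, and split on whether the majority color switches. The only difference is the invariant. You keep the uniform bound $D \le n/(2k)$, whereas the paper keeps the step-dependent bound $\less(\C_{2j}) \ge n/2 - n/(2j)$. Yours lets you skip the paper's inequality $1/j + 1/j^2 < 1/(j-1)$, because $m \ge k+1$ already gives $s \le n/(2k)$ directly.
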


For every integer $k \geq 1$ and $\alpha \geq 1 + 1/k$, define the class $\A^{k}_{\alpha}$ of algorithms that (i) are $\alpha$-augmented, (ii) perform global rebalancing, as specified in \autoref{lem:rebalancing}, before processing the input sequence, and (iii) during execution, maintain at most $2k$ cut-edges. Algorithms in $\A^{k}_{\alpha}$ may be either offline or online.

\subparagraph{Main bounds and paper organization.}

The main technical contribution of this paper is the following theorem on \emph{offline} algorithms.

\begin{restatable}{theorem}{offline}
    \label{thm:offline}
    Fix an integer $k \geq 1$ and let $\alpha \triangleq 3/2 + 1/k$. There exists an offline algorithm $\OFF \in \A^{k}_{\alpha}$ such that, for every input $\sigma$, it holds that $\OFF(\sigma) \leq O(k) \cdot \OPT(\sigma) + O(n)$.
\end{restatable}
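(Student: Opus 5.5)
We construct \OFF explicitly from a fixed optimal solution \OPT and charge its cost to \OPT phase by phase, following the outline in the introduction. Without loss of generality, \OPT recolors nodes one at a time: split each batch of recolorings into single-node moves between consecutive requests. The intermediate colorings need not be $1$-balanced. However, they differ from a $1$-balanced coloring by a bounded amount, so I would either allow \OPT's intermediate states to be $(1+O(1/n))$-balanced, or order the moves so that they alternate colors. Each single-node recoloring of \OPT changes $\CE{\OPT}$ by toggling the (at most two) cycle edges incident to that node.

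\textbf{Invariant.} Throughout, \OFF keeps $\CE{\OFF} \subseteq \CE{\OPT}$ with $|\CE{\OFF}| \le 2k$. Then every request on which \OFF pays a hit cost is a cut-edge of \OPT, so $\HC{\OFF} \le \HC{\OPT}$. It remains to bound $\MC{\OFF}$.

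\textbf{Phases.} At the start of a phase (including time $0$), \OFF applies \autoref{lem:rebalancing} to the current $1$-balanced $\CE{\OPT}$. This yields a $(1+1/k)$-balanced subset of size at most $2k$, at recoloring cost at most $n$. Within the phase, \OFF responds to each single-node move of \OPT at node $v$, which toggles edges $e, e'$ adjacent to $v$:
\begin{itemize}
\item If a toggled edge is removed from $\CE{\OPT}$ while it lies in $\CE{\OFF}$, then \OFF must also drop it to keep the invariant. To preserve parity, \OFF pairs the dropped edge with a neighbouring cut-edge of \OFF and deletes both. This merges arcs, i.e., recolors one arc.
\item If edges are only added to $\CE{\OPT}$, \OFF does nothing.
\item Moving a cut-edge by one position (when $v$ lies at an arc boundary of \OFF) costs $1$.
\end{itemize}
The dangerous case is the merge, which may recolor a long arc. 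The phase ends when \OFF's coloring stops being $\alpha$-balanced. At that point \OFF rebalances again from \OPT.

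\textbf{Charging within a phase.} For the local steps I would use a potential $\Phi$ equal to the number of nodes on which \OFF and \OPT disagree, or more precisely the size of the symmetric difference of the two colorings up to a global color swap. A merge that recolors arc $A$ flips exactly the nodes of $A$. Because the invariant forces \OFF's arcs to be unions of \OPT's arcs, the nodes of $A$ agreed or disagreed with \OPT uniformly before the merge. I would choose which neighbouring cut-edge to remove so that the recolored arc is one that currently \emph{disagrees} with \OPT. Then $\Phi$ drops by $|A|$, which pays for the merge. Each single move of \OPT changes $\Phi$ by at most $1$. This gives an amortized bound of the form $\MC{\OFF} \le O(1)\cdot \MC{\OPT} + \Phi_{\text{start}}$ within the phase (\autoref{cor:step}). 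Choosing the disagreeing arc may be impossible, for example when both neighbouring arcs agree with \OPT. Resolving this choice is where I expect the need for augmentation $3/2$ rather than $1+\eps$ to appear: then the merge is forced in a way that keeps $\less$ large, and only balance can be lost.

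\textbf{Bounding the phase count (main obstacle).} At the start of a phase $\Phi = O(n/k)$ by \autoref{lem:rebalancing}, since the two colorings differ only in balance slack. The phase ends only when \OFF's less-frequent color falls below $(1/4 - 1/(2k))n$. \OPT always has exactly $n/2$ nodes of each color. Hence at phase end \OFF and \OPT disagree on $\Omega(n)$ nodes. This must come from accumulated \OPT moves, possibly amplified by the merges. I would prove $\MC{\OPT} + \HC{\OPT} = \Omega(n/k)$ per phase (\autoref{lem:opt_lb}) by a counting argument. Each merge triggered by \OPT deleting a cut-edge removes arcs, and the $2k$ cut-edges of \OFF can only produce large imbalance once \OPT has either paid to move $\Omega(n/k)$ nodes or been forced to cut requested edges. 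The $3/2$ threshold is what makes a single merged arc, of size at most half the remaining span, insufficient to cross the bound without $\Omega(n/k)$ \OPT activity.

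\textbf{Summing up.} Each phase costs \OFF $O(n) + O(\text{\OPT's phase cost})$. With $\Omega(n/k)$ \OPT cost per complete phase, the total is $O(k)\cdot\OPT(\sigma)$. The last, incomplete phase contributes the additive $O(n)$, which gives \autoref{thm:offline}.
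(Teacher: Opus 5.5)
Your proposal has two genuine gaps, and both sit where the paper's proof does its real work.

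\emph{First, the merge step is not paid for by $\phi$.} Since $\CE{\OFF}\subseteq\CE{\OPT}$, every \OFF-arc is a union of consecutive \OPT-arcs. But those \OPT-arcs alternate in \OPT-color while the \OFF-arc is monochromatic. So agreement with \OPT flips at every edge of $\CE{\OPT}\setminus\CE{\OFF}$ inside the \OFF-arc, and your uniformity claim is false in general. The case you leave open is in fact the typical one. Start from $\CE{\OFF}=\CE{\OPT}$ (so $\phi=0$), and take consecutive nodes $x,v,u,y$ with $e=(x,v)$ a cut-edge and $v,u,y$ of one color. \OPT recolors $u$ (you do nothing) and then $v$. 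This deletes $e$ together with the \OPT-only edge $(v,u)$, so \OPT has merely shifted $e$ by two positions at cost $2$. Whichever neighbouring \OFF-arc you merge, of length $L\le n/2$ say, you pay about $L$ and $\phi$ rises to about $L$. This is a cost problem, not a balance problem, so no amount of augmentation fixes it.

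The missing idea is to \emph{swap} rather than merge. When $c_j\in\CE{\OFF}$ is deleted together with $c_i\in\CE{\OPT}\setminus\CE{\OFF}$, take the arc $\AC(c_i,c_y)$ of $\Phi$ (\autoref{def:potential}) and recolor $A(c_j,c_y)=A(c_i,c_y)\oplus\{w\}$. This replaces $c_j$ by $c_y$ and flips only nodes of $\Phi$ besides $w$, so $\phi$ drops by $d(c_i,c_y)$. In the example it recolors just $\{v,u\}$. Together with the easy cases, this yields $\MC{\OFF^t}+\Delta^t\phi\le\MC{\OPT^t}$ for every single-node step (\autoref{cor:step}). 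Without such a per-step inequality, even your claim that a phase costs \OFF only $O(n)+O(\OPT(p))$ is unsupported, since up to $k$ unpaid merges can occur per phase.

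\emph{Second, the per-phase lower bound is not proved, and its starting point is wrong.} \autoref{lem:rebalancing} controls only the balance of $\C_{2k}$, not its distance to \OPT. If \OPT alternates colors, every \OFF-arc after rebalancing contains roughly as many agreeing as disagreeing nodes, so $\phi\approx n/2$ at the start of the phase. Hence $\Omega(n)$ disagreement at the end of a phase is no evidence of \OPT activity, and your counting argument has nothing to count.

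The paper instead lower-bounds the amortized quantity, in three steps:
\begin{itemize}
\item It uses $\MC{\OFF}(p)\ge\less(\C_b)-\less(\C_e)$.
\item The phase ends at a request boundary, so \OPT is $1$-balanced there and $\less(\C_e)+\phi\ge n/2$ (\autoref{lem:lb_potential}). Together with $\phi\le n/2$ at the start, this gives $\Delta_p\phi\ge-\less(\C_e)$.
\item Combining, $\MC{\OFF}(p)+\Delta_p\phi\ge\less(\C_b)-2\less(\C_e)\ge n/(2k)$, using $\less(\C_b)\ge n/2-n/(2k)$ and $\less(\C_e)<n/4-n/(2k)$.
\end{itemize}
This is exactly where $\alpha=3/2+1/k$ enters. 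The factor $2$ counts both \OFF's recoloring and the drop of $\phi$; it has nothing to do with the size of a merged arc. Summing \autoref{cor:step} over the phase then gives $\MC{\OPT}(p)\ge n/(2k)$. That pays, with factor $3k+1$, for the rebalancing cost $n$ and the potential jump of at most $n/2$. Finally, your worry about \OPT's intermediate balance is unnecessary, since $1$-balance is used only at request boundaries.
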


The theorem guarantees the \emph{existence} of an algorithm~$\OFF$; the algorithm need not be efficient. \autoref{sec:offline} presents its construction and analysis.

Next, we show that within the class~$\A^{k}_{\alpha}$, one can construct an online randomized algorithm that is $O(k^2 \cdot \log^2 n)$-competitive against \emph{any offline algorithm in the same class}. The proof proceeds by formulating the problem as a metrical task system~\cite{BoLiSa92} and applying the algorithm of Bubeck et~al.~\cite{BuCoLL19}. It is given in \autoref{sec:online}.

\begin{restatable}{theorem}{online}
    \label{thm:online} 
    Fix an integer $k \geq 1$ and $\alpha \geq 1 + 1/k$. It is possible to construct an~online randomized algorithm $\OB \in \A^k_\alpha$ such that, for a fixed~$\beta$, it holds that $\E[\OB(\sigma)] \leq O(k^2 \cdot \log^2 n) \cdot \OFF(\sigma) + \beta$, 
    for every offline algorithm $\OFF \in \A^k_\alpha$ and every input~$\sigma$. 
\end{restatable}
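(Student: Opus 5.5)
We will prove \autoref{thm:online} by casting the restricted problem as a metrical task system and invoking the randomized MTS algorithm of Bubeck et al.~\cite{BuCoLL19}. That algorithm is $O(\log^2 N)$-competitive on any $N$-point metric space.

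\textbf{State space and metric.} The states are all valid, $\alpha$-balanced sets $\C$ of cut-edges with $|\C| \leq 2k$. Each set is paired with one of its two colorings, so a state is really a coloring. The number of states is at most $2\sum_{j\le 2k}\binom{n}{j} = n^{O(k)}$, hence $\log N = O(k\log n)$. The distance between two states is the number of nodes whose colors differ, i.e.\ the Hamming distance. This is a metric, and it equals exactly the recoloring cost of moving between the two colorings.

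\textbf{Task vectors.} A request on ring edge $e$ becomes the task vector assigning cost $1$ to every state with $e \in \C$ and cost $0$ otherwise. One timing detail needs care. In MTS the algorithm typically moves and then pays the task cost in its new state, whereas in bisection the hit cost is paid in the old state before recoloring. We handle this with the standard equivalence. Given an offline bisection solution in the class, shift its recoloring after request $t$ to occur before request $t+1$. This yields an MTS solution of identical cost, and the converse shift works the same way. Thus the optimal MTS cost from the fixed starting state is at most $\OFF(\sigma)$.

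\textbf{Common starting state.} All algorithms in $\A^k_\alpha$ begin with the global rebalancing of \autoref{lem:rebalancing}, which is a deterministic procedure applied to the fixed initial coloring. Its output $\C_{2k}$ is therefore the same start state $s_0$ for \OB and for every \OFF. This state is $(1+1/k)$-balanced, hence $\alpha$-balanced, so it lies in the state space. The rebalancing itself costs at most $n$, which we absorb into $\beta$.

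\textbf{Conclusion.} Run the algorithm of~\cite{BuCoLL19} from $s_0$, translating its moves back into bisection moves. The result \OB stays in $\A^k_\alpha$: every state it visits is $\alpha$-balanced with at most $2k$ cut-edges. Its expected cost is at most
\[
O(\log^2 N)\cdot \OFF(\sigma)+\beta' \;=\; O(k^2\log^2 n)\cdot\OFF(\sigma)+\beta' .
\]
Here the additive constant of the MTS guarantee depends only on the diameter of the metric, which is at most $n$, and on $N$. Hence $\beta'$ is a function of $n$ and $k$ alone, as required.

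The main obstacle is verifying that \cite{BuCoLL19} applies in the needed form. We must check that its $O(\log^2 N)$ bound, with additive term depending only on the metric, holds for arbitrary finite metrics started at a fixed state. We must also confirm that the serve-then-move convention introduces no loss. Both checks are routine, so no real difficulty is expected.
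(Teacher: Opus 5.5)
Your proposal is correct and follows the paper's proof. Like the paper, you encode $\A^k_\alpha$ as an MTS whose states are the $\alpha$-balanced configurations with at most $2k$ cut-edges, so $\log|\mathcal{S}| = O(k\log n)$; start every algorithm at the common rebalanced state; and apply the $O(\log^2|\mathcal{S}|)$ bound of Bubeck et~al. One correction to your timing remark, a point the paper does not address at all: the online direction is not cost-identical, since \OB must pay $c_t(x_{t-1})$ before it can move to $x_t$. However, $c_t(x_{t-1}) \le c_t(x_t) + d(x_{t-1},x_t)$ holds for $\{0,1\}$-valued costs and integer distances, so the lagged simulation loses only a factor of $2$.
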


\begin{corollary}
    \label{col:main}
    For any $\eps \in (0,1/2]$, there exists a $(3/2 + \eps)$-augmented and $O(\eps^{-3} \cdot \log^2 n)$-competitive online randomized algorithm for the online bisection problem on $n$ nodes, where all requests correspond to edges of a cycle connecting all nodes.
\end{corollary}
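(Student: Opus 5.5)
The corollary follows by combining \autoref{thm:offline} and \autoref{thm:online} with a suitable choice of $k$. Given $\eps \in (0,1/2]$, set $k \triangleq \lceil 1/\eps \rceil$, so that $k \geq 2$, $1/k \leq \eps$ and $k = O(1/\eps)$.

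Take $\alpha \triangleq 3/2 + 1/k$. Then $\alpha \geq 1 + 1/k$, so both theorems apply. We also have $\alpha \leq 3/2 + \eps$, so every algorithm in $\A^k_\alpha$ is $(3/2+\eps)$-augmented. (Equivalently, each cluster has at most $(3/4+\eps/2)\cdot n$ nodes.)

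Let $\OB \in \A^k_\alpha$ be the online randomized algorithm from \autoref{thm:online}, and let $\OFF \in \A^k_\alpha$ be the offline algorithm from \autoref{thm:offline}. Since $\OFF$ is in the same class, \autoref{thm:online} applies to it, and for every input $\sigma$ we chain the two bounds:
\[
\E[\OB(\sigma)] \leq O(k^2 \log^2 n)\cdot \OFF(\sigma) + \beta \leq O(k^2\log^2 n)\cdot\bigl(O(k)\cdot \OPT(\sigma) + O(n)\bigr) + \beta .
\]
This gives $\E[\OB(\sigma)] \leq O(k^3 \log^2 n)\cdot \OPT(\sigma) + \beta'$ with $\beta' = O(k^2 n \log^2 n) + \beta$.

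The additive term $\beta'$ depends only on $n$ (and on the fixed $\eps$), not on $\sigma$, so it is allowed by the definition of competitiveness. Substituting $k = O(1/\eps)$ turns the ratio into $O(\eps^{-3}\log^2 n)$.

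The argument is a direct composition, so there is no real obstacle. The only points to check are:
\begin{itemize}
\item the competitive guarantee of \autoref{thm:online} holds against every offline algorithm in $\A^k_\alpha$, and in particular against the specific $\OFF$ constructed in \autoref{thm:offline};
\item the baseline $\OPT$ in \autoref{thm:offline} is the $1$-balanced optimum, which matches the definition of $R$-competitiveness;
\item the additive constants stay independent of the input.
\end{itemize}
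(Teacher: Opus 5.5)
Your proposal is correct and matches the paper's proof essentially verbatim. Both set $k=\lceil 1/\eps\rceil$ and $\alpha=3/2+1/k\le 3/2+\eps$, then chain \autoref{thm:online} (applied to the specific $\OFF\in\A^k_\alpha$) with \autoref{thm:offline}. This gives ratio $O(k^3\log^2 n)=O(\eps^{-3}\log^2 n)$ and an additive term $O(k^2 n\log^2 n)+\beta$ that does not depend on the input $\sigma$.
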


\begin{proof}
    Let $k \triangleq \lceil 1/\eps \rceil$ and $\alpha \triangleq 3/2 + 1/k \leq 3/2 + \eps$. Consider the online randomized algorithm $\OB \in \A^k_\alpha$ guaranteed by 
    \autoref{thm:online}. 

    Since $\OB \in \A^k_\alpha$, it is $\alpha$-augmented, and thus also $(3/2 + \eps)$-augmented. Fix an input $\sigma$ and the offline algorithm $\OFF \in \A^k_\alpha$ provided by \autoref{thm:offline}. Let $\beta$ be the parameter from the statement of \autoref{thm:online}. Then, for every input $\sigma$, 
    \begin{align*}
        \E[\OB(\sigma)] 
            &\leq O(k^2 \cdot \log^2 n) \cdot \OFF(\sigma) + \beta 
            && \text{(by \autoref{thm:online})}\\
            &\leq O(k^3 \cdot \log^2 n) \cdot \OPT(\sigma) + O(n \cdot k^2 \cdot \log^2 n) + \beta
            && \text{(by \autoref{thm:offline})}\\
            &\leq O(\eps^{-3} \cdot \log^2 n) \cdot \OPT(\sigma) + O(\eps^{-2} \cdot n \cdot \log^2 n + \beta).
    \end{align*}
    This completes the proof of the claimed competitive ratio.
\end{proof}


\section{Offline Algorithm}
\label{sec:offline}

In this section, we fix an integer $k \geq 1$, set $\alpha \triangleq 3/2+1/k$, and we construct an offline algorithm $\OFF \in \A^k_\alpha$ that generates a solution on the basis of actions of \OPT.\footnote{Although the algorithm is well defined for any integer $k \geq 1$, only values $k \geq 3$ yield non-trivial augmentation factors.} Before processing the input, \OFF applies \autoref{lem:rebalancing} to compute a coloring that uses at most $2k$ edges derived from the initial coloring of \OPT.

\begin{definition}
    \label{def:cut-invariants} An offline algorithm \OFF maintains \emph{cut-edge invariants} if
    \begin{itemize}
        \item $\CE{\OFF} \subseteq \CE{\OPT}$ and
        \item $|\CE{\OFF}| = \min \{2k,|\CE{\OPT}|\}$.
    \end{itemize}
\end{definition}

Note that the cut-invariants imply that $\CE{\OFF} = \CE{\OPT}$ if $\CE{\OPT} \leq 2k$.
By \autoref{lem:rebalancing}, the cut-edge invariants hold after the global rebalancing procedure is applied at the beginning. We will also show that \OFF maintains these invariants throughout execution.

As $\CE{\OFF} \subseteq \CE{\OPT}$, $\OFF$ never incurs hit cost unless \OPT does. Hence, $\HC{\OFF}(\sigma) \leq \HC{\OPT}(\sigma)$ for every input $\sigma$. Therefore, our goal is to construct \OFF such that $\MC{\OFF}(\sigma) \leq O(k) \cdot \MC{\OPT}(\sigma) + O(n)$ for every input $\sigma$.

\subparagraph{Steps.}

Recall that in response to a single request, \OPT may recolor multiple nodes, and after these changes are executed, the coloring of \OPT has to be $1$-balanced. In~\autoref{sec:single_recoloring}, we define the changes that \OFF performs to mimic a single-node recoloring by \OPT. Each such change maintains the cut-edge invariants (cf.~\autoref{def:cut-invariants}), but does not necessarily preserve color balance. Thus, when \OFF processes all node recolorings of \OPT in response to a single request, it verifies whether its coloring is $\alpha$-balanced. If this is not the case, \OFF executes the global rebalancing procedure described in \autoref{lem:rebalancing}. This procedure produces an $\alpha$-balanced coloring and preserves the cut-edge invariants.

For succinctness, we number the node recolorings performed by \OPT and assume that each such recoloring corresponds to a single \emph{step}, while a single request may correspond to multiple recoloring steps. We use superscript $t$ to indicate that the state (or cost) refers to step $t$, and omit it when the considered step is clear from the context.


\subsection{Global Rebalancing}
\label{sec:global_rebalancing}

In this section, we prove \autoref{lem:rebalancing} by presenting a procedure that selects a subset of cut-edges from $\CE{\OPT}$ guaranteeing the cut-edge invariants (cf.~\autoref{def:cut-invariants}). Note that this procedure disregards the current set of cut-edges of \OFF. The construction begins with the cut-edges of \OPT and removes them iteratively until the desired number of cut-edges is reached, controlling the color imbalance introduced at each step.

\begin{proof}[Proof of~\autoref{lem:rebalancing}]
    If $|\C| \leq 2k$, we simply set $\C_{2k} = \C$ and the claim follows since $\less(\C_{2k}) = n/2$. Therefore, in what follows, we assume that $|\C| = 2 m > 2k$.

    We now describe how to iteratively construct the sets $\C_{2m}, \C_{2m-2}, \dots, \C_{2k+2}, \C_{2k}$. For every $j \in \{k, \dots, m\}$, we maintain the invariant $\less(\C_{2j}) \geq n/2 - n/(2j)$. The final set $\C_{2k}$ is then $(1 + 1/k)$-balanced as required.

    We start with $\C_{2m} \triangleq \C$. Clearly, the invariant holds since $\less(\C_{2m}) = n/2$.

    It now suffices to describe how to choose $\C_{2j-2}$ based on $\C_{2j}$ (for $m \geq j > k$), so that if the invariant holds for $\C_{2j}$, it also holds for $\C_{2j-2}$. Let $c_0, c_1, \dots, c_{2j-1}$ denote the cut-edges of~$\C_{2j}$ in clockwise order, starting from an arbitrary one. These cut-edges are numbered modulo $2j$, that is, $c_{i+2j}$ is identified with $c_i$.
    
    Fix one of the two colorings induced by $\C_{2j}$; without loss of generality, red is more frequent in $\C_{2j}$. Let $A(c_i, c_{i+1})$ be the smallest red arc. Define $\C_{2j-2} \triangleq \C_{2j} \setminus \{c_i, c_{i+1}\}$. In other words, $\C_{2j-2}$ is obtained from $\C_{2j}$ by changing the color of arc $A(c_i, c_{i+1})$ from red to blue. The resulting number of blue nodes in $\C_{2j-2}$ equals $\less(\C_{2j}) + d(c_i, c_{i+1})$.

    We first upper-bound $d(c_i, c_{i+1})$. Since the invariant holds for $\C_{2j}$, we have $\less(\C_{2j}) \geq n/2 - n/(2j)$, and therefore the number of red nodes in $\C_{2j}$ is at most $n/2 + n/(2j)$. Because $\C_{2j}$ contains $j$ red arcs and $A(c_i, c_{i+1})$ is the smallest among them, we obtain $d(c_i, c_{i+1}) \leq n/(2j) + n/(2j^2)$.

    If blue remains the less frequent color also in $\C_{2j-2}$, the invariant follows immediately, since $\less(\C_{2j-2}) = \less(\C_{2j}) + d(c_i, c_{i+1}) \geq \less(\C_{2j}) \geq n/2 - n/(2j) > n/2 - n/(2j-2)$.

    Otherwise, red becomes the less frequent color in $\C_{2j-2}$. Then
    \begin{align*}
        \less(\C_{2j-2}) 
        & = n - \less(\C_{2j}) - d(c_i, c_{i+1}) \\
        & \geq n - n/2 - n/(2j) - n/(2j^2) && \text{(as $\less(\C_{2j}) \leq n/2$)} \\
        & > n/2 - n/(2j-2).
    \end{align*}
    In either case, the invariant holds for $\C_{2j-2}$, and the construction is therefore complete.
\end{proof}


\subsection{Distance Between OPT and OFF (Green Sectors)}

We define $\Phi$ as the ``edit distance'' between \OPT and \OFF, representing the smallest set of nodes that must switch clusters to transform the partition of \OFF into that of \OPT.

To this end, we adapt the notion of an arc: a \emph{clockwise arc} $A^{\circ}(c_i, c_j)$ includes all nodes between $c_i$ and $c_j$ in the clockwise order (not necessarily along the shorter path connecting them). The symbol $\uplus$ denotes the disjoint union of sets.

\begin{definition}
    \label{def:potential}
    If $\CE{\OPT}=\CE{\OFF}$, then $\Phi = \emptyset$.
    Otherwise, let $c_1, c_2, \ldots, c_m$ denote the cut-edges from $\CE{\OPT} \setminus \CE{\OFF}$ listed in clockwise order. Define 
    \begin{itemize}
        \item $\Phi_0 \triangleq \AC(c_1, c_2) \uplus \AC(c_3, c_4) \uplus \ldots \uplus \AC(c_{m-1},c_m)$, and
        \item $\Phi_1 \triangleq \AC(c_m,c_1) \uplus \AC(c_2, c_3) \uplus \ldots \uplus \AC(c_{m-2},c_{m-1})$.
    \end{itemize}
    Then, $\Phi$ is the smaller of the two sets, with ties broken arbitrarily; cf.~\autoref{fig:potential}. 
\end{definition}

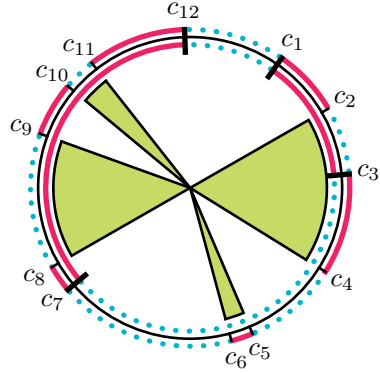
\begin{figure*}[t!]
    \centering
    \scalebox{1.0}{
    \begin{tikzpicture}[line cap=rect,line width=1pt]
    \filldraw [fill=white] (0,0) circle [radius=2cm];
    %
        \foreach \i/\j/\k in {-32/5/WildStrawberry,  6/30/Turquoise, 31/55/WildStrawberry, 56/92/Turquoise, -267/-232/WildStrawberry, -231/-220/Turquoise, -219/-200/WildStrawberry, -199/-150/Turquoise, -149/-140/WildStrawberry, -139/-75/Turquoise, -74/-67/WildStrawberry, -66/-33/Turquoise}
    {
        \ifthenelse{\equal{\k}{Turquoise}}{
            \draw[line width=2pt, color={\k}, line cap=round, dash pattern=on 0pt off 5pt]
                ([shift=(\i:2.1cm)]0,0) arc (\i:\j:2.1cm);
        }{
            \draw[line width=2pt, color={\k}]
                ([shift=(\i:2.1cm)]0,0) arc (\i:\j:2.1cm);
        }
    }
    \foreach \i/\j/\k in {5/55/WildStrawberry,  56/92/Turquoise, -267/-140/WildStrawberry, -139/4/Turquoise}
    {
        \ifthenelse{\equal{\k}{Turquoise}}{
            \draw[line width=2pt, color={\k}, line cap=round, dash pattern=on 0pt off 5pt]
                ([shift=(\i:1.9cm)]0,0) arc (\i:\j:1.9cm);
        }{
            \draw[line width=2pt, color={\k}]
                ([shift=(\i:1.9cm)]0,0) arc (\i:\j:1.9cm);
        }
    }
    %
    %
    \foreach \angle [count=\xi] in {55, 30, 5, -32, -67, -75, -140, -150, -200, -220, -232, -268}
    {
        \draw[line width=1pt] (\angle:2cm) -- (\angle:2.1cm);
        \node[font=\normalsize] at (\angle:2.36cm) {$c_{\xi}$};
    }
    \foreach \angle [count=\xi] in {55,  5,  -140,  -268}
    {
        \draw[line width=2pt] (\angle:1.8cm) -- (\angle:2.1cm);
    }
    \foreach \i/\j/\k in {30/-32/SpringGreen,  -67/-75/SpringGreen, -150/-200/SpringGreen, -220/-232/SpringGreen}
    {
    \draw[fill=\k] (0,0) -- (\i:1.8) arc (\i:\j:1.8);
        \draw[line width=1pt, color=black] (\j:0cm) -- (\j:1.8cm);
    }
    \end{tikzpicture}
    }
    \caption{An example configuration of \OPT and \OFF with $\CE{\OPT} = {c_1, c_2, \dots, c_{12}}$ and $\CE{\OFF}={c_1, c_3, c_7, c_{12}}$. Cut-edges of \OFF are shown with thicker lines. The coloring outside the circle corresponds to \OPT, while the coloring inside corresponds to \OFF. The set $\Phi$ consists of nodes indicated by green sectors. The endpoints of these sectors coincide with $\CE{\OPT} \setminus \CE{\OFF}$.}
    \label{fig:potential}
\end{figure*}

\begin{lemma}
    $\Phi$ is the minimum-cardinality set of nodes that \OFF has to recolor to make its coloring either identical to or completely opposite to the coloring of \OPT.
\end{lemma}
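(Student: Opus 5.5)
The plan is to compare the two colorings node by node through their XOR. Define the \emph{disagreement set} $D$ as the set of nodes whose color under \OFF differs from their color under \OPT (for a fixed choice of color names in each coloring). Making \OFF identical to \OPT requires recoloring exactly $D$. Making it completely opposite requires recoloring exactly $V \setminus D$. Swapping the color names of \OFF's coloring exchanges these two sets. So the minimum number of recolorings needed is $\min\{|D|, |V\setminus D|\}$, and it suffices to show that $\{D, V\setminus D\} = \{\Phi_0, \Phi_1\}$.

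The key step is to describe $D$ through cut-edges. Call an edge of the cycle a \emph{boundary} of $D$ if exactly one of its endpoints lies in $D$. An edge $(u,v)$ is a boundary exactly when the XOR indicator (\OFF color differs from \OPT color) changes between $u$ and $v$. This happens exactly when the edge is a cut-edge of precisely one of the two colorings, that is, when it lies in $\CE{\OPT} \triangle \CE{\OFF}$.

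By the cut-edge invariants (\autoref{def:cut-invariants}) we have $\CE{\OFF} \subseteq \CE{\OPT}$, so this symmetric difference is $\CE{\OPT}\setminus\CE{\OFF} = \{c_1,\dots,c_m\}$, and $m$ is even. Walking clockwise, the indicator flips exactly at each $c_i$. Hence the nodes of $D$ form every second arc between consecutive $c_i$'s, namely either the arcs $\AC(c_1,c_2), \AC(c_3,c_4),\dots$ or the complementary family. In other words, $D \in \{\Phi_0,\Phi_1\}$, and $V\setminus D$ is the other one. When $m=0$, the indicator is constant, so $D$ is either $\emptyset$ or $V$. Then \OFF already coincides with \OPT or with its opposite, and $\Phi=\emptyset$ is correct.

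Finally, since $\Phi$ is defined as the smaller of $\Phi_0$ and $\Phi_1$, it has size $\min\{|D|,|V\setminus D|\}$ and is itself a valid set to recolor. This gives the claimed minimality.

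The main obstacle is bookkeeping rather than substance. One must check that $\AC(c_i,c_{i+1})$ contains precisely the nodes strictly between the two edges, so that the families $\Phi_0$ and $\Phi_1$ partition $V$. One must also verify the boundary characterization edge by edge, with a four-case check on the colors of $u$ and $v$ under both colorings.
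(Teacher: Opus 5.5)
Your proposal is correct and follows the same route as the paper: both observe that $\Phi_0$ and $\Phi_1$ partition the nodes, with one being exactly the set where the colorings of \OPT and \OFF disagree and the other the set where they agree, so the smaller one is the minimum recoloring. Your boundary argument (the disagreement indicator flips exactly on the edges of $\CE{\OPT}\setminus\CE{\OFF}$, using the invariant $\CE{\OFF}\subseteq\CE{\OPT}$) supplies the justification that the paper's short proof leaves implicit.
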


\begin{proof}
    First, observe that $\Phi_0$ and $\Phi_1$ are disjoint, and their union is the set of all nodes. Second, either $\Phi_0$ or $\Phi_1$ consists of the nodes on which the colorings of \OPT and \OFF differ. Consequently, the other set contains nodes on which the colorings of \OPT and \OFF coincide.
\end{proof}

As an extreme illustration, consider the case in which the coloring of \OFF is the ``inverse'' of the coloring of \OPT (nodes that are blue in the \OPT solution are red in \OFF, and vice versa). In this case, the colorings correspond to the same partitions, and $\Phi = \emptyset$.


\subsection{A Single Color Change of OPT: Definition of OFF}
\label{sec:single_recoloring}

We say that the state of an edge is \emph{flipped} if, due to some action of an algorithm, the edge becomes a cut-edge when it was not one, or ceases to be a cut-edge when it was one.

\begin{observation}
    \label{obs:arc_coloring}
    Fix two edges $e_1$ and $e_2$. Flipping their states is equivalent to recoloring all nodes on arc~$A(e_1,e_2)$ (and only them).
\end{observation}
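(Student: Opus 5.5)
The plan is to work directly with the cycle structure and the correspondence between colorings and cut-edge sets. Since the cycle is $v_1,\dots,v_n$ with edges $(v_i,v_{i+1})$, a coloring is determined up to swapping color names by its set of cut-edges. The aim is to show that the symmetric difference of the old and new cut-edge sets is exactly $\{e_1,e_2\}$, and that this is achieved precisely by recoloring the nodes of $A(e_1,e_2)$ and no others. I will treat the arc as the set of nodes strictly between $e_1$ and $e_2$ along the chosen side. By complementation, the argument is the same whichever of the two arcs is used.

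The first step is to characterize which edges change state under an arbitrary recoloring. Let $S$ be the set of recolored nodes. An edge $(u,w)$ changes its cut status exactly when exactly one of $u,w$ lies in $S$. This holds because recoloring both endpoints or neither preserves whether their colors agree, while recoloring exactly one endpoint toggles it. Hence the set of flipped edges is precisely the edge boundary of $S$ in the cycle.

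The second step is to compute this boundary for $S=A(e_1,e_2)$. The nodes of $A(e_1,e_2)$ form a contiguous path of the cycle that begins just after $e_1$ and ends just before $e_2$. The boundary of a contiguous nonempty proper set of cycle nodes consists of exactly the two edges leaving it, which here are $e_1$ and $e_2$. If $e_1=e_2$ the statement is trivial, with an empty arc and nothing flipped, and this case can be handled separately.

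For the converse ("and only them"), suppose $S$ is any node set whose boundary is exactly $\{e_1,e_2\}$. Removing $e_1$ and $e_2$ from the cycle leaves two paths. Since no other boundary edge exists, $S$ is constant on each path, so $S$ is one of these paths, which are the two arcs between $e_1$ and $e_2$. Recoloring the complementary arc yields the same partition, because it differs from recoloring $A(e_1,e_2)$ only by swapping color names on all nodes. There is no real obstacle here. The only care needed is in the conventions: being consistent about which side $A(e_1,e_2)$ denotes, and noting that the choice does not matter up to a global color swap.
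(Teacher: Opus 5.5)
Your proof is correct, and it is the reasoning the paper leaves implicit: the paper states this as an observation without proof, relying on the fact that an edge changes cut status exactly when one of its endpoints is recolored, so the set of flipped edges is the boundary of the recolored set. Keep your remark that either side between $e_1$ and $e_2$ works up to a global color swap, since the paper later applies the observation to clockwise arcs from $\Phi$, which need not be the shorter path.
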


Recall that to serve a single request, \OPT may recolor several nodes. We refine its recolorings per request to subsequences of recolorings of single nodes. Using the definition of set $\Phi$, we now specify how \OFF responds to a single color change performed by \OPT in step~$t$. Let $w$ denote this node.

By \autoref{obs:arc_coloring}, this color change flips the states of the two edges adjacent to $w$. Consequently, \OPT either introduces two new adjacent cut-edges, deletes two adjacent cut-edges, or shifts an existing cut-edge by one position. We consider these three cases and specify how \OFF responds to each, both in terms of recoloring and the resulting cut-edge modifications.

In the description below, $\CE{\OFF^{t-1}}$, $\CE{\OPT^{t-1}}$, and $\Phi^{t-1}$ denote the sets of cut-edges of \OFF and \OPT, and the value of $\Phi$, respectively, immediately before step $t$. Recall that $\CE{\OFF^{t-1}} \subseteq \CE{\OPT^{t-1}}$ by the cut-edge invariants. 

\begin{description}
    \item[OPT shifts a cut-edge $c_i$.]
    If $c_i \in \CE{\OFF^{t-1}}$, then \OFF also recolors $w$, thereby shifting~$c_i$ as well; otherwise, it does nothing. 

    \item[OPT introduces two new adjacent cut-edges $c_i$ and $c_j$.]
    If $|\CE{\OFF^{t-1}}| \leq 2 (k-1)$, then \OFF also recolors $w$, thereby adding $c_i$ and $c_j$ to $\CE{\OFF}$; otherwise, it does nothing.

    \item[OPT removes two adjacent cut-edges $c_i$ and $c_j$.] 
    We consider four sub-cases:
    \begin{itemize}
        \item If $c_i, c_j \notin \CE{\OFF^{t-1}}$, then \OFF takes no action.

        \item If $c_i, c_j \in \CE{\OFF^{t-1}}$ and $\CE{\OFF^{t-1}} = \CE{\OPT^{t-1}}$, then \OFF also recolors~$w$, thereby removing $c_i$ and $c_j$ from $\CE{\OFF}$.

        \item If $c_i, c_j \in \CE{\OFF^{t-1}}$ and $\CE{\OPT^{t-1}} \setminus \CE{\OFF^{t-1}} \neq \emptyset$, then \OFF first recolors~$w$, thereby removing $c_i$ and $c_j$. Next, \OFF selects an arbitrary arc $A(c_x, c_y) \in \Phi^{t-1}$ (guaranteed by~\autoref{def:potential}) and recolors all nodes on~$A(c_x, c_y)$. By \autoref{obs:arc_coloring}, this operation adds $c_x$ and $c_y$ to $\CE{\OFF}$.

        \item In the final sub-case, exactly one of $\{c_i, c_j\}$, say $c_i$, does not belong to $\CE{\OFF^{t-1}}$. Since $c_i \in \CE{\OPT^{t-1}} \setminus \CE{\OFF^{t-1}}$, \autoref{def:potential} implies the existence of an~arc $A(c_i, c_y) \in \Phi^{t-1}$. In this situation, \OFF recolors all nodes on $A(c_j, c_y)$. By \autoref{obs:arc_coloring}, this operation removes $c_j$ from $\CE{\OFF}$ and adds $c_y$ to it.
    \end{itemize}
\end{description}

It is straightforward to verify that in each of the above cases, \OFF preserves the cut-edge invariants. In particular, every cut-edge added by \OFF is present in $\CE{\OPT^t}$. Hence, it suffices to analyze the recoloring cost of \OFF, which will be done in the next section.


\subsection{A Single Color Change of OPT: Cost Analysis}
\label{sec:change}

We analyze a single step $t$ in which \OPT recolors a node $w$.

Let $\Phi^{t-1}$ and $\Phi^t$ denote the values of $\Phi$ immediately before and after step $t$. Let $\Phi^t_*$ be equal to~$\Phi^t_0$ if $\Phi^{t-1} = \Phi^{t-1}_0$, and to $\Phi^t_1$ otherwise. That is, $\Phi^t_*$ represents the ``continuation'' of $\Phi^{t-1}$, disregarding a possible switch between $\Phi_0$ and $\Phi_1$. Clearly $|\Phi^t| \leq  |\Phi^t_*|$.

We define the potential function as $\phi \triangleq |\Phi|$. Let $\Delta^t \phi = \phi^t - \phi^{t-1}$, and let $\Delta^t_* \phi = \phi^t_* - \phi^{t-1}$.

In the following three lemmas, we analyze the possible changes performed by \OPT. In each lemma, we show that $\MC{\OFF^t} + \Delta^t_* \phi \leq 1$. Since $\phi^t \leq \phi^t_*$, it follows that $\MC{\OFF^t} + \Delta^t \phi \leq \MC{\OFF^t} + \Delta^t_* \phi \leq 1 = \MC{\OPT^t}$. The symbol $\oplus$ denotes the symmetric difference of sets.

\begin{lemma}
    \label{lem:case1}
    If \OPT shifts a cut-edge, then $\MC{\OFF^t} + \Delta^t_* \phi \leq 1$.
\end{lemma}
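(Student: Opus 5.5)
The plan is to split into the two sub-cases from the definition of \OFF's response: either the shifted cut-edge $c_i$ belongs to $\CE{\OFF^{t-1}}$, or it does not. The key observation, used in both cases, is that $\Phi$ is determined by the nodes on which the colorings of \OPT and \OFF differ (or, for the complementary set, coincide). By the lemma characterizing $\Phi$, $\Phi^{t-1}$ is one of the two sets $\Phi_0,\Phi_1$, namely either the set of nodes where \OPT and \OFF disagree or the set where they agree. The set $\Phi^t_*$ is the same kind of set (disagreement or agreement) after step $t$. Tracking how $w$ moves between these sets then gives $\Delta^t_*\phi$ directly. Along the way I will note that the pairing structure of $\CE{\OPT}\setminus\CE{\OFF}$ in clockwise order is preserved under a shift, so that the labels $\Phi_0/\Phi_1$ continue consistently.

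\textbf{Case $c_i \in \CE{\OFF^{t-1}}$.} Here \OFF also recolors $w$, paying $\MC{\OFF^t}=1$. Both colorings flip on exactly the node $w$, so the agreement/disagreement status of every node is unchanged. The set $\CE{\OPT}\setminus\CE{\OFF}$ is unchanged as well, since the shifted edge is present in both before and after. Hence $\Phi^t_*=\Phi^{t-1}$ as node sets, $\Delta^t_*\phi=0$, and the total is $1$.

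\textbf{Case $c_i\notin\CE{\OFF^{t-1}}$.} \OFF does nothing, so $\MC{\OFF^t}=0$. Only \OPT's color of $w$ changes, so $w$ switches between the agreement set and the disagreement set, and every other node keeps its status. Equivalently, in terms of \autoref{def:potential}, the element $c_i$ of $\CE{\OPT}\setminus\CE{\OFF}$ moves by one position. The clockwise order and the pairing are preserved, because the new position lies between the same neighbors in the list. Thus the arc of the continued family bounded by $c_i$ gains or loses exactly the node $w$, so $|\Phi^t_*|=|\Phi^{t-1}|\pm1$ and $\Delta^t_*\phi\le 1$, giving a total of at most $1$.

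\textbf{Main obstacle.} The delicate point is justifying that $\Phi^t_*$, defined via the index $0/1$, really is the ``same-type'' continuation, i.e.\ that the indexing of $\Phi_0,\Phi_1$ is consistent across the step. I would handle this by identifying the continuation intrinsically as the disagreement set (or agreement set) rather than through indices. This works because a shift neither adds nor removes elements of $\CE{\OPT}\setminus\CE{\OFF}$ and does not reorder them. A degenerate case also needs a check: if $\CE{\OPT^{t-1}}=\CE{\OFF^{t-1}}$, then $c_i$ lies in both, so we are in the first case and nothing extra arises.
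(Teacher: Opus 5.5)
Your proposal is correct and follows the paper's two-case argument. When $c_i \in \CE{\OFF^{t-1}}$, \OFF mirrors the shift at cost $1$ with $\Delta^t_*\phi = 0$; otherwise \OFF pays nothing and $\Phi^t_* = \Phi^{t-1} \oplus \{w\}$, so $\Delta^t_*\phi = \pm 1$. Your added remarks (treating the continuation as the agreement or disagreement set, and noting that the shifted edge keeps its place in the clockwise order of $\CE{\OPT}\setminus\CE{\OFF}$) just make explicit what the paper asserts directly.
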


\begin{proof}
    Let $c_i$ denote the shifted cut-edge. 
    \begin{itemize}
        \item If $c_i \in \CE{\OFF^{t-1}}$, then \OFF recolors $w$, paying a cost of $1$. As a result, \OFF reproduces the shift of a~cut-edge (cf.~\autoref{fig:case3}a and \ref{fig:case3}b). This leaves $\Phi$ unchanged, and thus $\Delta^t_* \phi = 0$.
    
        \item Otherwise, $c_i \notin \CE{\OFF^{t-1}}$, and then \OFF does nothing, thus its cost is $0$, cf.~\autoref{fig:case3}c. In this case, $\Phi^t_* = \Phi^{t-1} \oplus \{w\}$, and therefore $\Delta^t_*\phi = \pm 1$.
        \qedhere    
    \end{itemize}
\end{proof}

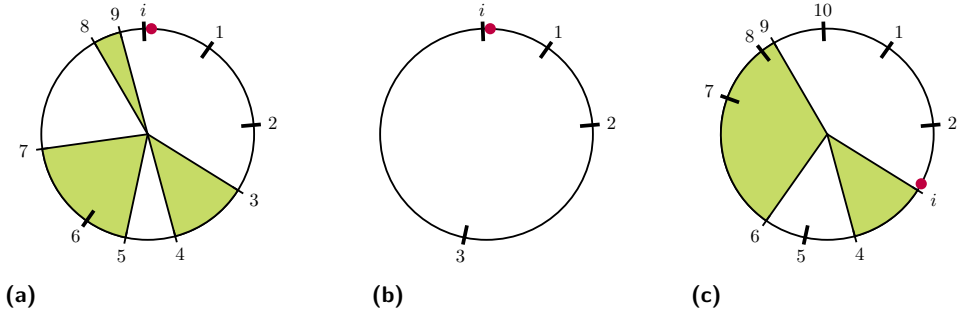
\begin{figure*}[ht]
    \centering
    \begin{subfigure}[t]{0.3\textwidth}
        \scalebox{0.7}{
        \begin{tikzpicture}[line cap=rect,line width=1pt]
        \filldraw [fill=white] (0,0) circle [radius=2cm];
        \foreach \i/\j/\k in {-32/-75/SpringGreen,  -102/-172/SpringGreen, -240/-255/SpringGreen}
        {
        \draw[fill=\k] (0,0) -- (\i:2) arc (\i:\j:2);
         \draw[line width=1pt, color=black] (\j:0cm) -- (\j:2cm);
        }
        \foreach \angle [count=\xi] in {55, 5, -32, -75, -102, -125, -172, -240, -255, -268}
        {
          \draw[line width=1pt] (\angle:1.9cm) -- (\angle:2.1cm);
          \node[font=\normalsize] at (\angle:2.36cm) {\ifnum\xi=10 $i$\else\xi\fi};
        }
         \fill[purple]  ++(-272:2cm) circle[radius=3pt];
       \foreach \angle in {55, 5, -125, -268}
          \draw[line width=2pt, color=black] (\angle:1.8cm) -- (\angle:2.1cm);
        \end{tikzpicture}
        }   
        \caption{}
    \end{subfigure}%
    \hspace{0.2in}
    \begin{subfigure}[t]{0.3\textwidth}
        \scalebox{0.7}{
        \begin{tikzpicture}[line cap=rect,line width=1pt]
        \filldraw [fill=white] (0,0) circle [radius=2cm];
        \foreach \angle [count=\xi] in {55, 5, -102, -268}
        {
          \draw[line width=1pt] (\angle:1.9cm) -- (\angle:2.1cm);
          \node[font=\normalsize] at (\angle:2.36cm) {\ifnum\xi=4 $i$\else\xi\fi};
        }
         \fill[purple]  ++(-272:2cm) circle[radius=3pt];
        \foreach \angle in {55, 5, -102, -268}
          \draw[line width=2pt, color=black] (\angle:1.8cm) -- (\angle:2.1cm);
        \end{tikzpicture}
        }
    \caption{}
    \end{subfigure}%
    \begin{subfigure}[t]{0.3\textwidth}
        \scalebox{0.7}{
        \begin{tikzpicture}[line cap=rect,line width=1pt]
        \filldraw [fill=white] (0,0) circle [radius=2cm];
        \foreach \i/\j/\k in {-32/-75/SpringGreen,  -125/-240/SpringGreen}
        {
        \draw[fill=\k] (0,0) -- (\i:2) arc (\i:\j:2);
         \draw[line width=1pt, color=black] (\j:0cm) -- (\j:2cm);
        }
        \foreach \angle [count=\xi] in {55, 5, -32, -75, -102, -125, -200, -232, -240, -268}
        {
          \draw[line width=1pt] (\angle:1.9cm) -- (\angle:2.1cm);
          \node[font=\normalsize] at (\angle:2.36cm) {\ifnum\xi=3 $i$\else\xi\fi};
        }
         \fill[purple]  ++(-28:2cm) circle[radius=3pt];
       \foreach \angle in {55, 5, -102, -232, -200, -268}
          \draw[line width=2pt, color=black] (\angle:1.8cm) -- (\angle:2.1cm);
        
    \end{tikzpicture}
    }
    \caption{}
    \end{subfigure}%
    \caption{$\Phi$ right before \OPT shifts a cut-edge. For simplicity, we label cut-edges by~$i$ instead of~$c_i$. The purple dot represents the node $w$ recolored by \OPT, thick cut-edges belong to $\CE{\OFF^{t-1}}$, and nodes in $\Phi^{t-1}$ are marked with green sectors.}
    \label{fig:case3}
\end{figure*}

\begin{lemma}
    \label{lem:case2}
    If \OPT adds two new adjacent cut-edges, then $\MC{\OFF^t} + \Delta^t_* \phi \leq 1$.
\end{lemma}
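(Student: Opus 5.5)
We follow the pattern of \autoref{lem:case1} and split according to the rule defining \OFF's response. Let $w$ be the node recolored by \OPT, and let $c_i, c_j$ be the two edges adjacent to $w$; both become new cut-edges of \OPT. We distinguish whether $|\CE{\OFF^{t-1}}| \leq 2(k-1)$.

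\emph{Case 1: $|\CE{\OFF^{t-1}}| \leq 2(k-1)$.} Here \OFF also recolors $w$, paying $1$, and thereby adds $c_i$ and $c_j$ to $\CE{\OFF}$. We need to show $\Delta^t_*\phi = 0$. The set $\CE{\OPT} \setminus \CE{\OFF}$ is unchanged, because both new edges were added to both algorithms. Moreover, \OFF and \OPT recolored the same node, so the set of nodes on which their colorings differ is unchanged, and likewise for the set on which they agree. It follows that $\Phi^t_0 = \Phi^{t-1}_0$ and $\Phi^t_1 = \Phi^{t-1}_1$. Hence $\Phi^t_* = \Phi^{t-1}$, and the total is $1 + 0 = 1$. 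Since the invariant forces $\CE{\OFF^{t-1}} = \CE{\OPT^{t-1}}$ in this case, the degenerate situation $\Phi = \emptyset$ needs a one-line remark: there $\Phi^t$ remains $\emptyset$ as well.

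\emph{Case 2: $|\CE{\OFF^{t-1}}| = 2k$.} Here \OFF does nothing and pays $0$. It therefore suffices to show that $|\Phi^t_*| - |\Phi^{t-1}| \leq 1$. The coloring of \OPT changes only at $w$ and that of \OFF does not change, so the set of disagreement nodes changes exactly by $\{w\}$, and so does the set of agreement nodes. The edges $c_i, c_j$ now belong to $\CE{\OPT}\setminus\CE{\OFF}$. They are consecutive in clockwise order and enclose exactly the single node $w$. Inserting them into the clockwise list splits one arc of $\Phi^{t-1}_0$ or $\Phi^{t-1}_1$ and creates the singleton arc $\{w\}$, belonging to the other family, while preserving the alternation. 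As a result, $\Phi^t_* = \Phi^{t-1} \oplus \{w\}$, so $\Delta^t_*\phi = \pm 1$.

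A subtle point arises when $\CE{\OPT^{t-1}} = \CE{\OFF^{t-1}}$, so that $\Phi^{t-1} = \emptyset$ by convention and $\Phi_*$ is not defined via $\Phi_0, \Phi_1$. We handle it directly. After the step, the only cut-edges of \OPT missing from \OFF are $c_i, c_j$. The two candidate sets are then $\{w\}$ and its complement, so $\phi^t \leq 1$ and we may take $\Phi^t_* = \{w\}$, giving $\Delta^t_*\phi \le 1$.

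The main obstacle I expect is making rigorous the claim that inserting two adjacent edges into the clockwise list toggles exactly $w$ in the ``continuation'' set $\Phi_*$, i.e.\ that the parity labeling of $\Phi_0$ versus $\Phi_1$ is consistent across the step. I would argue this through the semantic characterization from the lemma following \autoref{def:potential}: one of $\Phi_0, \Phi_1$ is the disagreement set and the other the agreement set. Defining $\Phi_*$ as ``the disagreement set'' (or ``the agreement set''), whichever $\Phi^{t-1}$ was, makes the toggle by $\{w\}$ immediate.
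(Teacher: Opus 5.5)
Your proof is correct and follows the paper's argument. You use the same case split: if $|\CE{\OFF^{t-1}}| \le 2(k-1)$, \OFF pays $1$ and $\Phi$ is unchanged; if $|\CE{\OFF^{t-1}}| = 2k$, \OFF pays $0$ and $\Phi^t_* = \Phi^{t-1}\oplus\{w\}$. Your extra care fills in details the paper leaves implicit: you note that the invariant forces $\Phi^{t-1}=\emptyset$ throughout the first case, and you pin down $\Phi_*$ via the agreement/disagreement characterization, so that the degenerate case $\Phi^{t-1}=\emptyset$ in the second case gives $\Phi^t_*=\{w\}$.
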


\begin{proof}
    We distinguish two cases based on the cardinality of $\CE{\OFF^{t-1}}$.
    \begin{itemize}
        \item If $|\CE{\OFF^{t-1}}| = 2k$, \OFF takes no action, and hence incurs no cost, cf.~\autoref{fig:case1}a and~\ref{fig:case1}b. In this case, $\Phi^t_* = \Phi^{t-1} \oplus \{w\}$, and thus $\Delta^t_* \phi = \pm 1$.
        
        \item Otherwise $|\CE{\OFF^{t-1}}| \leq 2 (k-1)$. Then, \OFF recolors $w$ paying a cost of $1$, cf.~\autoref{fig:case1}c. In this case, $\Phi$ remains unchanged, and thus $\Delta^t_* \phi = 0$.
    \end{itemize}
    \qedhere
\end{proof}

\begin{figure*}[ht]
    \centering
    \begin{subfigure}[t]{0.3\textwidth}
        \scalebox{0.7}{
        \begin{tikzpicture}[line cap=rect,line width=1pt]
        \filldraw [fill=white] (0,0) circle [radius=2cm];
        \foreach \i/\j/\k in {-32/-75/SpringGreen,  -125/-200/SpringGreen}
        {
        \draw[fill=\k] (0,0) -- (\i:2) arc (\i:\j:2);
         \draw[line width=1pt, color=black] (\j:0cm) -- (\j:2cm);
        }
        \foreach \angle [count=\xi] in {55, 5, -32, -75, -102, -125, -200, -232, -240, -268}
        {
          \draw[line width=1pt] (\angle:1.9cm) -- (\angle:2.1cm);
          \node[font=\normalsize] at (\angle:2.36cm) {\xi};
        }
        
        \fill[purple]  ++(-210:2cm) circle[radius=3pt];
        \foreach \angle in {55, 5, -102, -232, -240, -268}
          \draw[line width=2pt, color=black] (\angle:1.8cm) -- (\angle:2.1cm);
        \end{tikzpicture}
        }
    \caption{}
    \end{subfigure}%
    \begin{subfigure}[t]{0.3\textwidth}
        \scalebox{0.7}{
        \begin{tikzpicture}[line cap=rect,line width=1pt]
        \filldraw [fill=white] (0,0) circle [radius=2cm];
        \foreach \i/\j/\k in {30/-32/SpringGreen,  -67/-75/SpringGreen, -150/-200/SpringGreen, -220/-232/SpringGreen}
        {
        \draw[fill=\k] (0,0) -- (\i:2) arc (\i:\j:2);
         \draw[line width=1pt, color=black] (\j:0cm) -- (\j:2cm);
        }
        \foreach \angle [count=\xi] in {55, 30, 5, -32, -67, -75, -125, -150, -200, -220, -232, -268}
        {
          \draw[line width=1pt] (\angle:1.9cm) -- (\angle:2.1cm);
          \node[font=\normalsize] at (\angle:2.36cm) {\xi};
        }
        \fill[purple]  ++(-10:2cm) circle[radius=3pt];
        \foreach \angle in {55, 5, -125, -268}
          \draw[line width=2pt, color=black] (\angle:1.8cm) -- (\angle:2.1cm);
        \end{tikzpicture}
        }
    \caption{}
    \end{subfigure}%
    \hspace{0.1in}
    \begin{subfigure}[t]{0.3\textwidth}
        \scalebox{0.7}{
        \begin{tikzpicture}[line cap=rect,line width=1pt]
        \filldraw [fill=white] (0,0) circle [radius=2cm];
        
        \foreach \angle [count=\xi] in {55, 5, -102, -268}
        {
          \draw[line width=1pt] (\angle:1.9cm) -- (\angle:2.1cm);
          \node[font=\normalsize] at (\angle:2.36cm) {\xi};
        }
         \fill[purple]  ++(-236:2cm) circle[radius=3pt];
        \foreach \angle in {55, 5, -102, -268}
          \draw[line width=2pt, color=black] (\angle:1.8cm) -- (\angle:2.1cm);
        \end{tikzpicture}
        }
    \caption{}
    \end{subfigure}
    \caption{$\Phi$ right before \OPT introduces two new cut-edges. Notation as in \autoref{fig:case3}.}
    \label{fig:case1}
\end{figure*}
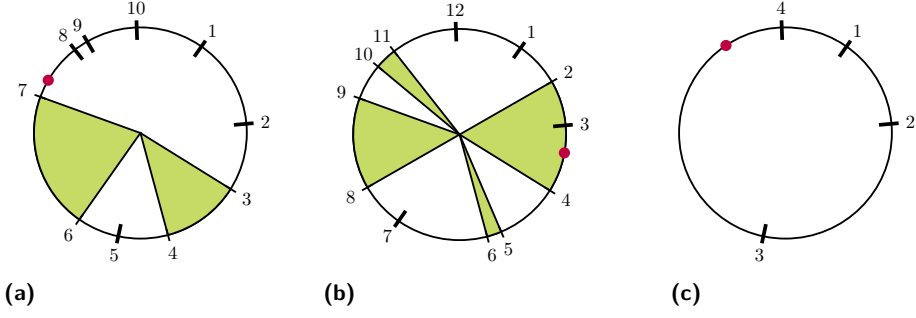

\begin{lemma}
    \label{lem:case3}
    If \OPT removes two adjacent cut-edges, then $\MC{\OFF^t} + \Delta^t_* \phi \leq 1$.
\end{lemma}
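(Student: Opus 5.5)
We follow the four sub-cases in the definition of \OFF's response when \OPT removes two adjacent cut-edges $c_i, c_j$, i.e., recolors the node $w$ that forms a one-node arc $A(c_i,c_j)$. In each sub-case we bound the recoloring cost of \OFF and compare $\Phi^t_*$ with $\Phi^{t-1}$. As in \autoref{lem:case1} and \autoref{lem:case2}, the main tool is \autoref{obs:arc_coloring}. The coloring of \OPT changes only at $w$. In the continuation $\Phi_*$, the set $\Phi^{t-1}$ changes by $\{w\}$ exactly when \OFF does not also recolor $w$, and it changes by any arc that \OFF recolors.

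\emph{Sub-case $c_i,c_j\notin\CE{\OFF^{t-1}}$.} \OFF pays $0$. The edges $c_i,c_j$ are consecutive elements of $\CE{\OPT^{t-1}}\setminus\CE{\OFF^{t-1}}$, and the arc between them is $\{w\}$. Removing them merges the neighbouring $\Phi$-arcs in the alternating structure, so $\Phi^t_*=\Phi^{t-1}\oplus\{w\}$ and $\Delta^t_*\phi=\pm1$. \emph{Sub-case both in \OFF and $\CE{\OFF^{t-1}}=\CE{\OPT^{t-1}}$.} \OFF recolors $w$ at cost $1$. Both $\Phi$'s are empty, so $\Delta^t_*\phi=0$.

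\emph{Sub-case both in \OFF, with $\CE{\OPT^{t-1}}\setminus\CE{\OFF^{t-1}}\neq\emptyset$.} \OFF pays $1$ for $w$, which keeps the colorings of \OPT and \OFF in agreement or disagreement at $w$. It then pays $d(c_x,c_y)=|A(c_x,c_y)|$ to recolor an arc $A(c_x,c_y)\subseteq\Phi^{t-1}$. After this recoloring, \OFF agrees with \OPT on that arc exactly when it agrees with \OPT elsewhere, so those nodes leave $\Phi_*$. Hence $\Phi^t_*=\Phi^{t-1}\setminus A(c_x,c_y)$, which gives $\Delta^t_*\phi=-d(c_x,c_y)$ and a total of exactly $1$. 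One technical check is needed here. The arc must be the clockwise arc $\AC$ appearing in $\Phi$, and if it is not the shorter path, \autoref{obs:arc_coloring} should be applied with the complementary arc. To avoid this, we can note that recoloring either side produces the same partition, so \OFF may recolor the $\Phi$-side; alternatively, we can recolor the clockwise arc directly, since \autoref{obs:arc_coloring} holds for either arc between the two edges up to a global color swap.

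\emph{Final sub-case: $c_i\notin\CE{\OFF^{t-1}}$, $c_j\in\CE{\OFF^{t-1}}$, with $A(c_i,c_y)\in\Phi^{t-1}$.} This is the step we expect to be the main obstacle, because we must locate $w$ and $c_j$ relative to the arc $A(c_i,c_y)$. The arc of $\Phi$ starting at $c_i$ lies on one side of $c_i$. We split into two cases. (a) The arc extends away from $w$. Then $A(c_j,c_y)=\{w\}\uplus A(c_i,c_y)$ and \OFF pays $d(c_i,c_y)+1$. \OFF now agrees with \OPT (after the \OPT change at $w$) on all of this region, so $\Phi^t_*=\Phi^{t-1}\setminus A(c_i,c_y)$ and the total is $1$. (b) The arc extends through $w$ past $c_j$. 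This cannot happen with $c_j$ strictly inside, because a $\Phi$-arc is delimited by consecutive edges of $\CE{\OPT}\setminus\CE{\OFF}$ and $c_j\in\CE{\OFF}$ is allowed inside it. We therefore handle it directly. Here $A(c_j,c_y)=A(c_i,c_y)\setminus\{w\}$, so \OFF pays $d(c_i,c_y)-1$, and all of $A(c_i,c_y)$, including $w$ (which \OPT recolored), leaves $\Phi_*$. This gives a total of $-1\le 1$. Verifying these two geometric configurations carefully, using the alternation in \autoref{def:potential}, is the crux. Once they are verified, every sub-case gives $\MC{\OFF^t}+\Delta^t_*\phi\le1$.
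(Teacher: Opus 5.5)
Your proof is correct and matches the paper's: the same four sub-cases with the same cost and potential accounting. The only difference is in the final sub-case, where the paper covers both positions of the $\Phi$-arc at once via $A(c_j,c_y)=A(c_i,c_y)\oplus\{w\}$ with cost at most $d(c_i,c_y)+1$, while you split them into (a) and (b). Configuration (b) does occur (it is the situation of Figure~\ref{fig:case2}d), so delete the sentence saying it ``cannot happen''; your computation giving total $-1$ handles it correctly.
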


\begin{proof}
    Let $c_i, c_j$ denote the cut-edges removed by \OPT. 
    \begin{itemize}
        \item If $c_i, c_j \notin \CE{\OFF^{t-1}}$, \OFF takes no action and therefore incurs no cost, cf.~\autoref{fig:case2}a. In this case, $\Phi^t_* = \Phi^{t-1} \oplus \{w\}$, implying $\Delta^t_* \phi = \pm 1$.

        \item If $c_i, c_j \in \CE{\OFF^{t-1}}$ and $\CE{\OFF^{t-1}} = \CE{\OPT^{t-1}}$, then \OFF recolors $w$ paying a cost of~$1$. In this case, $\Phi$ remains unchanged, and hence $\Delta^t_* \phi = 0$.

        \item In the next case, $c_i, c_j \in \CE{\OFF^{t-1}}$ and $\CE{\OPT^{t-1}} \setminus \CE{\OFF^{t-1}} \neq \emptyset$. Let $A(c_x, c_y) \in \Phi^{t-1}$ be the arc selected by \OFF; see~\autoref{fig:case2}b, where this arc is marked with a star. We first examine the effect of both \OPT and \OFF recoloring $w$, and then subsequent recoloring of $A(c_x,c_y)$ by \OFF. When both recolor $w$, \OFF pays a cost of $1$, and $\Phi$ remains unchanged. Next, when \OFF recolors $A(c_x, c_y)$, it pays and additional cost of $d(c_x,c_y)$, and $\Phi^t_* = \Phi^{t-1} \oplus A(c_x, c_y)$. Since $A(c_x, c_y) \subseteq \Phi^{t-1}$, we have $\Phi^t_* = \Phi^{t-1} \setminus A(c_x, c_y)$. Overall, $\MC{\OFF^t} = 1 + d(c_x, c_y)$ and $\Delta \phi = - d(c_x, c_y)$.

        \item In the final case, $c_i \notin \CE{\OFF^{t-1}}$ and $c_j \in \CE{\OFF^{t-1}}$. By \autoref{def:potential}, there exists an arc $A(c_i, c_y) \subseteq \Phi^{t-1}$, marked with a star in~\autoref{fig:case2}c and \ref{fig:case2}d. \OFF recolors all nodes in $A(c_j, c_y)$ (note that $A(c_j, c_y) = A(c_i, c_y) \oplus \{w\}$), incurring cost of at most $d(c_i, c_y)+1$. In this case, $\Phi^t_* = \Phi^{t-1} \oplus A(c_j, c_y) \oplus \{w\} = \Phi^{t-1} \oplus A(c_i, c_y)$. Since $A(c_i, c_y) \subseteq \Phi^{t-1}$, we obtain $\Phi^t_* = \Phi^{t-1} \setminus A(c_i, c_y)$, and hence $\Delta^t_* \phi = -d(c_i,c_y)$. 
    \qedhere
    \end{itemize}
\end{proof}

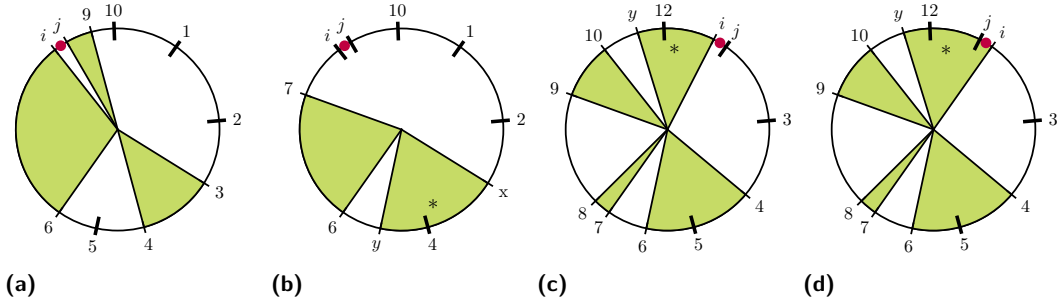
\begin{figure*}[ht]
    \centering
    \begin{subfigure}[t]{0.25\textwidth}
        \scalebox{0.67}{
        \begin{tikzpicture}[line cap=rect,line width=1pt]
        \filldraw [fill=white] (0,0) circle [radius=2cm];
        \foreach \i/\j/\k in {-32/-75/SpringGreen,  -125/-232/SpringGreen, -240/-255/SpringGreen}
        {
        \draw[fill=\k] (0,0) -- (\i:2) arc (\i:\j:2);
         \draw[line width=1pt, color=black] (\j:0cm) -- (\j:2cm);
        }
        \foreach \angle [count=\xi] in {55, 5, -32, -75, -102, -125, -232, -240, -255, -268}
        {
        \draw[line width=1pt] (\angle:1.9cm) -- (\angle:2.1cm);
        \node[font=\normalsize] at (\angle:2.36cm)
         {\ifnum\xi=7 $i$\else\ifnum\xi=8 $j$\else\xi\fi\fi};
        }
         \fill[purple]  ++(-236:2cm) circle[radius=3pt];
        \foreach \angle in {55, 5, -102, -268}
          \draw[line width=2pt, color=black] (\angle:1.8cm) -- (\angle:2.1cm);
        \end{tikzpicture}
        }
  \caption{}
    \end{subfigure}%
    \begin{subfigure}[t]{0.25\textwidth}
        \scalebox{0.67}{
        \begin{tikzpicture}[line cap=rect,line width=1pt]
        \filldraw [fill=white] (0,0) circle [radius=2cm];
        \foreach \i/\j/\k in {-32/-102/SpringGreen,  -125/-200/SpringGreen}
        {
        \draw[fill=\k] (0,0) -- (\i:2) arc (\i:\j:2);
         \draw[line width=1pt, color=black] (\j:0cm) -- (\j:2cm);
        }
        \foreach \angle [count=\xi] in {55, 5, -32, -75, -102, -125, -200, -232, -240, -268}
        {
          \draw[line width=1pt] (\angle:1.9cm) -- (\angle:2.1cm);
          \node[font=\normalsize] at (\angle:2.36cm)
            {\ifnum\xi=3 x\else\ifnum\xi=5 $y$\else\ifnum\xi=8 $i$\else\ifnum\xi=9 $j$\else\xi\fi\fi\fi\fi};
        }
         \fill[purple]  ++(-236:2cm) circle[radius=3pt];
         \node[] at (-67:1.6cm) {\Large{$*$}};
       \foreach \angle in {55, 5, -75, -232, -240, -268}
          \draw[line width=2pt, color=black] (\angle:1.8cm) -- (\angle:2.1cm);
        \end{tikzpicture}
        }
    \caption{}
    \end{subfigure}%
    \begin{subfigure}[t]{0.25\textwidth}
        \scalebox{0.67}{
        \begin{tikzpicture}[line cap=rect,line width=1pt]
        \filldraw [fill=white] (0,0) circle [radius=2cm];
        \foreach \i/\j/\k in {-40/-102/SpringGreen,  -125/-135/SpringGreen, -200/-232/SpringGreen, 107/63/SpringGreen}
        {
        \draw[fill=\k] (0,0) -- (\i:2) arc (\i:\j:2);
         \draw[line width=1pt, color=black] (\j:0cm) -- (\j:2cm);
        }
        \foreach \angle [count=\xi] in {63, 55, 5, -40, -75, -102, -125, -135, -200, -232, -253, -268}
        {
          \draw[line width=1pt] (\angle:1.9cm) -- (\angle:2.1cm);
          \node[font=\normalsize] at (\angle:2.36cm)
            {\ifnum\xi=1 $i$\else\ifnum\xi=2 $j$\else\ifnum\xi=11 $y$\else\xi\fi\fi\fi};
        }
         \fill[purple]  ++(59:2cm) circle[radius=3pt];
         \node[] at (85:1.6cm) {\Large{$*$}};
        \foreach \angle in {55, 5, -75, -268 }
          \draw[line width=2pt, color=black] (\angle:1.8cm) -- (\angle:2.1cm);
        \end{tikzpicture}
        }
        \caption{}
    \end{subfigure}%
    \begin{subfigure}[t]{0.25\textwidth}
        \scalebox{0.67}{
        \begin{tikzpicture}[line cap=rect,line width=1pt]
        \filldraw [fill=white] (0,0) circle [radius=2cm];
        \foreach \i/\j/\k in {-40/-102/SpringGreen,  -125/-135/SpringGreen, -200/-232/SpringGreen, 107/55/SpringGreen}
        {
        \draw[fill=\k] (0,0) -- (\i:2) arc (\i:\j:2);
         \draw[line width=1pt, color=black] (\j:0cm) -- (\j:2cm);
        }
        \foreach \angle [count=\xi] in {63, 55, 5, -40, -75, -102, -125, -135, -200, -232, -253, -268}
        {
          \draw[line width=1pt] (\angle:1.9cm) -- (\angle:2.1cm);
          \node[font=\normalsize] at (\angle:2.36cm)
            {\ifnum\xi=1 $j$\else\ifnum\xi=2 $i$\else\ifnum\xi=11 $y$\else\xi\fi\fi\fi};
        }
         \fill[purple]  ++(59:2cm) circle[radius=3pt];
         \node[] at (81:1.6cm) {\Large{$*$}};
        \foreach \angle in {63, 5, -75, -268 }
          \draw[line width=2pt, color=black] (\angle:1.8cm) -- (\angle:2.1cm);
        \end{tikzpicture}
        }
    \caption{}
    \end{subfigure}%
    \caption{$\Phi$ right before \OPT deletes two adjacent cut-edges. Notation as in \autoref{fig:case3}.}
    \label{fig:case2}
\end{figure*} 

From \autoref{lem:case1}, \autoref{lem:case2} and \autoref{lem:case3} together with $\Delta^t \phi \le \Delta^t_* \phi$ and $\MC{\OPT^t} = 1$, we immediately obtain the following bound.

\begin{corollary}
    \label{cor:step}
    For every step $t$, it holds that $\MC{\OFF^t} + \Delta^t \phi \leq \MC{\OPT^t}$. 
\end{corollary}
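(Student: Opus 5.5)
The corollary follows by combining the three case lemmas with the relation between $\Phi^t$ and its continuation $\Phi^t_*$. I would argue as follows.

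Fix a step $t$, in which \OPT recolors a single node $w$, so $\MC{\OPT^t} = 1$. By \autoref{obs:arc_coloring}, recoloring $w$ flips exactly the two edges adjacent to $w$. Hence exactly one of three things happens: \OPT shifts a cut-edge (one of the two edges was cut), \OPT introduces two adjacent cut-edges (neither was cut), or \OPT removes two adjacent cut-edges (both were cut). These cases are exhaustive and are handled respectively by \autoref{lem:case1}, \autoref{lem:case2} and \autoref{lem:case3}. Each lemma gives
\[
\MC{\OFF^t} + \Delta^t_* \phi \leq 1 .
\]

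It remains to pass from $\Delta^t_*\phi$ to $\Delta^t\phi$. By \autoref{def:potential}, $\Phi^t$ is the smaller of $\Phi^t_0$ and $\Phi^t_1$. The set $\Phi^t_*$ is one of these two sets. Therefore
\[
\phi^t = |\Phi^t| \leq |\Phi^t_*| = \phi^t_* ,
\]
and subtracting $\phi^{t-1}$ from both sides gives $\Delta^t \phi \leq \Delta^t_*\phi$.

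Chaining the two inequalities yields
\[
\MC{\OFF^t} + \Delta^t\phi \leq \MC{\OFF^t} + \Delta^t_*\phi \leq 1 = \MC{\OPT^t}.
\]

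One subtlety concerns the degenerate case $\CE{\OPT}=\CE{\OFF}$, where $\Phi=\emptyset$ by convention. The definitions of $\Phi_0$ and $\Phi_1$ then need consistent handling, for instance by taking $\Phi_0=\emptyset$ and $\Phi_1=V$. With that convention the bound $\phi^t\le\phi^t_*$ still holds, because $\Phi^t=\emptyset$ makes it trivial.

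The second subtlety, and the main point needing care, is that $\Phi^t_*$ must be well defined when the number of cut-edges in $\CE{\OPT}\setminus\CE{\OFF}$ changes during the step. The continuation has to be read as in the lemmas, namely $\Phi^{t-1}\oplus X$, where $X$ is the set of nodes on which the relative coloring changed. The three lemmas already compute $\Phi^t_*$ in exactly this form, so nothing beyond them is needed.
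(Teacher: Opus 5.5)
Your proposal is correct and follows the paper's argument. The paper also combines the bound $\MC{\OFF^t} + \Delta^t_*\phi \le 1$ from the three case lemmas with $\phi^t \le \phi^t_*$ (since $\Phi^t$ is the smaller of the two candidate sets) and $\MC{\OPT^t} = 1$; your remarks on why the three cases are exhaustive and on reading $\Phi^t_*$ as $\Phi^{t-1} \oplus X$ are sound clarifications of points the paper leaves implicit.
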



\subsection{Approximation Ratio of OFF} 

Recall that before processing the first request, \OFF performs the global rebalancing procedure described in \autoref{lem:rebalancing}. This guarantees that \OFF satisfies the cut-edge invariants and remains $(1 + 1/k)$-balanced (and therefore also $\alpha$-balanced, since $\alpha = 3/2 + 1/k$). The rebalancing procedure is invoked whenever \OFF ceased to be $\alpha$-balanced after applying the changes induced by \OPT recolorings. Consequently, $\OFF \in \A^k_\alpha$, and it remains to bound its total cost and compare it with that of \OPT.

We begin with the following auxiliary lemma.

\begin{lemma}
\label{lem:lb_potential}
    Whenever \OPT is $1$-balanced, $\less(\CE{\OFF}) + \phi \geq n/2$.
\end{lemma}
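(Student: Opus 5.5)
The plan is to compare the coloring of \OFF with the coloring of \OPT through the set $\Phi$. By the lemma following \autoref{def:potential}, recoloring exactly the nodes of $\Phi$ turns the coloring of \OFF into a coloring that is either identical or completely opposite to that of \OPT. In both cases the resulting coloring induces the same partition as \OPT. Since \OPT is $1$-balanced, each of its color classes has exactly $n/2$ nodes. So after recoloring $\Phi$, both color classes of the modified \OFF coloring have size exactly $n/2$.

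The key step is a counting argument on the less frequent color class of \OFF. Let $L$ be the set of nodes that have the less frequent color in the coloring of \OFF, so $|L| = \less(\CE{\OFF})$. Let $M$ be the complementary set, of size $n - |L| \geq n/2$. Recoloring $\Phi$ moves only nodes of $\Phi$ between the two color classes. Hence the class that was $M$ becomes $(M \setminus \Phi) \cup (L \cap \Phi)$, and its size is at least $|M| - |\Phi \cap M|$. This size must equal $n/2$, so
\[
  n/2 \;\geq\; |M| - |\Phi \cap M| \;\geq\; n - |L| - \phi ,
\]
and rearranging gives $|L| + \phi \geq n/2$. Equivalently: nodes outside $\Phi$ keep their \OFF color, so the \OPT class containing $M \setminus \Phi$ has at least $|M| - \phi$ nodes, and therefore $n/2 \geq n - \less(\CE{\OFF}) - \phi$.

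I expect no real obstacle here. The only point needing care is the ``completely opposite'' case. We never need to know which \OPT color the nodes of $M \setminus \Phi$ end up with, only that they all lie in a single \OPT color class, which has size $n/2$. That argument is symmetric in the color names, so it covers both cases. The degenerate case $\CE{\OPT} = \CE{\OFF}$, where $\Phi = \emptyset$, is also covered directly: \OFF then induces the same partition as \OPT, so $\less(\CE{\OFF}) = n/2$.
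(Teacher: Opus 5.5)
Your proof is correct and is essentially the paper's argument in set language. The paper writes $\less(\CE{\OFF})$ and $\phi$ as minima over the four counts $z_{\mathrm{RR}}, z_{\mathrm{RB}}, z_{\mathrm{BR}}, z_{\mathrm{BB}}$ and lower-bounds their sum by $\min\{z_{\mathrm{RR}}+z_{\mathrm{RB}}, z_{\mathrm{BR}}+z_{\mathrm{BB}}\} = n/2$; your observation that the larger \OFF class minus $\Phi$ lies inside a single \OPT class of size $n/2$ is the same counting, packaged so that the case check becomes a one-line bound.
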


\begin{proof}
    Let $z_{\mathrm{RR}}$, $z_{\mathrm{RB}}$, $z_{\mathrm{BR}}$ and $z_{\mathrm{BB}}$ denote the number of nodes colored red by both \OPT and \OFF, red by \OPT and blue by \OFF, blue by \OPT and red by \OFF, and blue by both \OPT and \OFF, respectively. By the definitions of $\phi$ and $\less(\CE{\OFF})$, we obtain 
    \begin{align*}
        \less(\CE{\OFF}) + \phi
        & = \min \{z_{\mathrm{RR}} + z_{\mathrm{BR}},  z_{\mathrm{BB}} + z_{\mathrm{RB}} \} 
        + \min \{z_{\mathrm{RB}} + z_{\mathrm{BR}}, z_{\mathrm{RR}} + z_{\mathrm{BB}} \}, \\
        & \geq \min \{ z_{\mathrm{RB}} + z_{\mathrm{RR}}, z_{\mathrm{BR}} + z_{\mathrm{BB}} \} \\
        & = \min\{n/2, n/2\} = n/2,
    \end{align*}
    where the penultimate equality holds because \OPT is $1$-balanced.
\end{proof}

In the following, we fix an input sequence $\sigma$ and run both $\OPT$ and $\OFF$ on it. This induces a sequence of steps, and \OFF performs global rebalancings between some of them. These rebalancings partition the sequence of steps into \emph{phases}; the last phase may not necessarily end with a global rebalancing.

For any phase $p$, let $\MC{\OFF}(p)$ denote the recoloring cost of \OFF during $p$, and let $\Delta_p \phi$ denote the change in potential $\phi$ during the same phase, both measured excluding the rebalancing. The following lemma states that the amortized cost incurred by \OFF in~$p$ is appropriately large, which later enables us to upper-bound the cost of the rebalancing procedure.

\begin{lemma}
    \label{lem:opt_lb}
    For any phase $p$, it holds that $\MC{\OFF}(p) + \Delta_p \phi \geq n/(2k)$. 
\end{lemma}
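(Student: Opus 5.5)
The plan is to compare the state of \OFF at the start of phase $p$ (right after a global rebalancing) with its state at the end of $p$ (right before the next rebalancing). I will use $\less(\CE{\OFF})$ as a progress measure and convert its change into a lower bound on recoloring cost and on the potential. This applies to phases terminated by a rebalancing; for the final phase, if it has none, the lemma is not needed, since the rebalancing charge arises only for phases that end with one.

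\emph{Start of the phase.} By \autoref{lem:rebalancing}, \OFF is $(1+1/k)$-balanced, so $\less_{\mathrm{start}} \ge n/2 - n/(2k)$. Since $\Phi$ is the smaller of the complementary sets $\Phi_0,\Phi_1$ (\autoref{def:potential}), we always have $\phi_{\mathrm{start}} \le n/2$.

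\emph{End of the phase.} Rebalancing is triggered only after all recolorings of \OPT for a request have been processed. At that moment \OPT is $1$-balanced, and \OFF is not $\alpha$-balanced. Hence its more frequent color exceeds $\alpha n/2 = (3/4+1/(2k))n$, which gives $\less_{\mathrm{end}} < n/4 - n/(2k)$. Applying \autoref{lem:lb_potential} at this moment yields $\phi_{\mathrm{end}} \ge n/2 - \less_{\mathrm{end}}$.

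\emph{Combining.} Recoloring a single node changes $\less(\CE{\OFF})$ by at most $1$. Therefore
\[
\MC{\OFF}(p) \ge \less_{\mathrm{start}} - \less_{\mathrm{end}} \ge n/2 - n/(2k) - \less_{\mathrm{end}}.
\]
Together with $\Delta_p\phi = \phi_{\mathrm{end}} - \phi_{\mathrm{start}} \ge n/2 - \less_{\mathrm{end}} - n/2$, this gives
\[
\MC{\OFF}(p) + \Delta_p\phi \ge n/2 - n/(2k) - 2\,\less_{\mathrm{end}} > n/2 - n/(2k) - n/2 + n/k = n/(2k).
\]

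The only delicate point is justifying that \autoref{lem:lb_potential} may be applied at the end of the phase. This needs \OPT to be $1$-balanced there, which holds precisely because the balance check and the rebalancing are performed only at request boundaries, not between individual steps.
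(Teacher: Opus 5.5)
Your proof is correct and is essentially the paper's argument. It uses the same bounds on $\less$ at the start and end of the phase, the same bound $\phi\le n/2$ at the start, and applies \autoref{lem:lb_potential} at the request boundary where \OPT is $1$-balanced, combining them identically. The only minor difference is how you obtain $\MC{\OFF}(p)\ge \less(\C_b)-\less(\C_e)$: you note that $\less$ changes by at most $1$ per single-node recoloring, which is a cleaner substitute for the paper's case split on whether the less frequent color switches during the phase. Your remark that the lemma is only needed for phases ending with a rebalancing is also accurate.
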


\begin{proof}
    Let $\C_b$ and $\C_e$ denote the sets of cut-edges of \OFF at the beginning and at the end of phase $p$, respectively; the latter is this set immediately before the global rebalancing. Then, 
    \begin{align*}
        \less(\C_e) & < n - \alpha \cdot (n/2) = n/4 - n/(2k),
            && \text{(as rebalancing was triggered)} \\
        \less(\C_b) & \geq n/2 - n/(2k).
            && \text{(by \autoref{lem:rebalancing})}
    \end{align*}
    Next, we show that 
    \begin{equation}
        \label{eq:mc_off_bound}
        \MC{\OFF}(p) \geq \less(\C_b) - \less(\C_e). 
    \end{equation}
    To this end, assume without loss of generality that red is the less frequent color in $\C_e$. If red is also the less frequent color in $\C_b$, then \eqref{eq:mc_off_bound} follows immediately. Otherwise, blue was less frequent in $\C_b$, and then there exist a moment within~$p$ when exactly $n/2$ nodes were colored red. Hence, $\MC{\OFF}(p) \geq n/2 - \less(\C_e) \geq \less(\C_b) - \less(\C_e)$.

    We now derive a lower bound on $\Delta_p \phi$. The last step of each phase corresponds to the final recoloring performed by \OPT when serving some request. Consequently, \OPT is $1$-balanced at that time. By \autoref{lem:lb_potential}, the potential value immediately before rebalancing is at least $n/2 - \less(\C_e)$. Since the potential at the beginning of $p$ is at most $n/2$, it follows that
    \begin{equation}
        \label{eq:delta_phi_bound}
        \Delta_p\phi \geq - \less(\C_e).
    \end{equation}
    Combining \eqref{eq:mc_off_bound} and \eqref{eq:delta_phi_bound} yields
    $\MC{\OFF}(p) + \Delta_p\phi 
        \geq \less(\C_b) - 2 \cdot \less(\C_e) 
        \geq n/(2k)$.
\end{proof}

We can finally prove \autoref{thm:offline}, restated below. 

\offline*

\begin{proof}
    As established above, $\OFF \in \A^k_\alpha$, so it remains to bound its total cost. Fix an input sequence~$\sigma$, and let it consist of $h$ phases $p_1, p_2, \ldots, p_h$.

    For each phase $p$, let $\OFF(\tp)$ and $\Delta_{\tp} \phi$, be the cost of \OFF within $p$ with rebalancing, and the change of potential within $p$ including the rebalancing. Let $\OPT(p)$ denote the cost of \OPT in phase $p$. The rebalancing performed at the end of~$p$ incurs a cost of at most~$n$, and the corresponding change in potential is at most~$n/2$. Hence, for each phase~$p$,
    \begin{align*}
        \OFF&(\tp)  + \Delta_{\tp}\phi \\
        & \leq \HC{\OFF}(p) + \MC{\OFF}(p) + \Delta_{p}\phi + (3/2) \cdot n \\
        & \leq \HC{\OFF}(p) + (3k + 1) \cdot (\MC{\OFF}(p) + \Delta_{p}\phi) 
            && \text{(by \autoref{lem:opt_lb})} \\
        & \leq \HC{\OFF}(p) + (3k + 1) \cdot \MC{\OPT}(p)
            && \text{(by \autoref{cor:step} over steps in $p$)} \\
        & \leq \HC{\OPT}(p) + (3k + 1) \cdot \MC{\OPT}(p)
            && \text{(as $\CE{\OFF} \subseteq \CE{\OPT}$)} \\
        & \leq (3k + 1) \cdot \OPT(p).
    \end{align*}
    The initial rebalancing performed by \OFF at the beginning of~$\sigma$ incurs a cost of at most~$n$, and the corresponding change in potential is at most~$n/2$. Therefore,
    \begin{align*}
        \OFF(\sigma) + \Delta_{\sigma}\phi 
        & = (3/2) \cdot n + \textstyle \sum_{i=1}^h (\OFF(\tp_i) + \Delta_{\tp_i}\phi)  \\
        & = (3/2) \cdot n + \textstyle \sum_{i=1}^h (3k+1) \cdot \OPT(p_i) \\
        & = O(k) \cdot \OPT(\sigma) + O(n).
    \end{align*}
    Since the potential $\phi$ is initially~$0$ and always non-negative, we have $\Delta_{\sigma}\phi \ge 0$, and thus the theorem follows.
\end{proof}


\section{Online algorithm}
\label{sec:online}

We first recall the definition of a metrical task system (MTS). 
An MTS is defined on a metric space $(\mathcal{S}, d)$, where $\mathcal{S}$ is a set of states and $d$ specifies the distance between any two states. There is an initial state $x_0 \in S$. 

The request sequence $\langle c_t : \mathcal{S} \rightarrow \mathbb{R^+}, t \geq 1 \rangle$ consists of cost functions $c_t$, and an algorithm has to produce a sequence of states $\langle x_t \in \mathcal{S}, t \geq 1 \rangle$. The total cost of such a solution is $\sum_t c_t(x_t)+d(x_{t-1}, x_t)$.

\online*

\begin{proof}
    We first argue that the online bisection problem on $n$ nodes, restricted to algorithms from class $\A^k_\alpha$, forms a metrical task system. Each state in $\mathcal{S}$ is a set of cut-edges~$\C$ such that $|\C| \leq 2k$ and $\C$ is $\alpha$-balanced. Before serving the input sequence, every algorithm from class $\A^k_\alpha$ performs global rebalancing using the procedure from \autoref{lem:rebalancing}. This corresponds to selecting the initial state $x_0 \in \mathcal{S}$. The cost of this initial rebalancing can be neglected, as it is identical for all algorithms in $\A^k_\alpha$. The distance $d$ between two states is defined as the minimum number of nodes that have to be recolored to change one state into the other. The value of $c_t(x) \in \{0,1\}$ is the hit cost incurred if an~algorithm state is~$x$. 

    By the result of Bubeck et al.~\cite{BuCoLL19}, there exists a randomized $O(\log^2 |\mathcal{S}|)$-competitive algorithm for every MTS. Let \OB be their algorithm for the MTS instance described above. In this case, $|\mathcal{S}| = \sum_{i=1}^k \binom{n}{2i} \leq k \cdot  \binom{n}{2k} \leq k \cdot n^{2k}$, and thus $\log^2 |\mathcal{S}| = O(k^2 \cdot \log^2 n)$. Clearly, $\OB \in \A^k_\alpha$, and for every (offline) algorithm $\OFF \in \A^k_\alpha$ and every input~$\sigma$, it holds that $\E[\OB(\sigma)] \leq O(k^2 \cdot \log^2 n) \cdot \OFF(\sigma) + \beta$, where $\beta$ is a parameter depending on $\mathcal{S}$, but independent of $\sigma$.
\end{proof}

\section{Conclusions}

In this paper, we designed a randomized $O(\eps^{-3} \cdot \log^2 n)$-competitive algorithm for the online bisection problem with ring demands, requiring clusters of size $3/4+\eps$. No super-constant lower bound is known for this variant; the current general randomized lower bound of $\Omega(\log n)$ does not appear to be adaptable to the ring demands. A natural direction for future work is to close the gap between upper and lower bounds and, perhaps more importantly, to reduce the required augmentation factor to $1+\eps$.

One approach that could bring progress toward the general case of online bisection is to study variants in which requests are restricted to a set $Q$ different from the cycle. For instance, the case where $Q$ is a tree seems to require substantially different techniques.


\bibliography{references.bib}


\end{document}